\newcounter{Accumulate} \setcounter{Accumulate}{0}
  \newwrite\accuwrite \immediate\openout\accuwrite=\jobname.acc
\newenvironment{accumulate}{\Collect@Body\accuPrint}{}
\newcommand{\accuPrint}[1]{
 \ifthenelse{\value{Accumulate} = 0}{%
      #1
  }
  {
    \newtoks\prxxxm
    \prxxxm{#1}
    \immediate\write\accuwrite{\the\prxxxm}
  }
}
\newcommand{\ifaccumulating}[1]{%
  \ifthenelse{\value{Accumulate} = 1}{%
    #1%
  }{}%
}
\newcommand{\ifnoaccumulating}[1]{%
  \ifthenelse{\value{Accumulate} = 0}{%
    #1%
  }{}%
}
\newcommand{\accuprint}{%
  \ifthenelse{\value{Accumulate} = 1}{
    \immediate\closeout\accuwrite
    \input{\jobname.acc} %
  }{}
}
\def\scri#1{{\EuScript#1}}
\def\OO{{\cal O}}
\newcommand{\crg}{\mathrm{cr}}
\newcommand{\crd}{\mathrm{cr}}
\newcommand{\crdr}[1]{\mathrm{cr}_{#1}}
\newcommand{\ins}{\mathrm{ins}}
\def\MEI(#1,#2){\mathrm{MEI}(#1,#2)}
\def\rMEI(#1,#2){\mathrm{r\mbox{-}MEI}(#1,#2)}
\def\INS(#1,#2){\mathrm{\ins}(#1,#2)}
\def\rINS(#1,#2){\mathrm{r\mbox{-}\ins}(#1,#2)}
\newcommand{\poly}{\mathit{poly}}
\newcommand{\defend}{\hfill$\diamond$}
\let\apxmark\relax
\newtheorem{thm}{Theorem}
\newtheorem{claim}[thm]{Claim}
\newtheorem{lem}[thm]{Lemma}
\newtheorem{cor}[thm]{Corollary}
\theoremstyle{definition}
\newtheorem{defn}[thm]{Definition}
\theoremstyle{remark}
\newtheorem{rem}[thm]{Remark}
\begin{document}
\ifaccumulating{\setcounter{page}{0}}

\title{\bf Inserting Multiple Edges into a Planar Graph}
\author{{Markus Chimani}\\
\emph{\small Theoretical Computer Science, University Osnabr\"uck, Germany}\\
{\tt\small  markus.chimani@uni-osnabrueck.de}\\[2ex]
{Petr Hlin\v{e}n\'y}\footnote{P.~Hlin\v{e}n\'y has been supported by the Czech Science Foundation project 14-03501S.}\\
\emph{\small Faculty of Informatics, Masaryk University Brno, Czech Republic}\\
{\tt\small  hlineny@fi.muni.cz}}
\date{}

\maketitle

\vspace{1cm}
\begin{abstract}
Let $G$ be a connected planar (but not yet embedded) graph and $F$ a set of additional edges not yet in $G$. The
{\em multiple edge insertion} problem (MEI) asks for a drawing of $G+F$ with the
minimum number of pairwise edge crossings, such that the subdrawing of $G$ is
plane. An optimal solution to this problem approximates the crossing number of the graph $G+F$.

Finding an exact solution to MEI is NP-hard for 
general $F$, but linear time solvable for the special case of $|F|=1$ (SODA~01, Algorithmica)
or when all of $F$ are incident to a new vertex (SODA~09).

The complexity for general $F$ but with constant $k=|F|$ was open, but algorithms
both with relative and absolute approximation guarantees have been presented
(SODA 11, ICALP~11). We show that the problem is fixed parameter tractable (FPT) in $k$ 
for biconnected $G$, or if the cut vertices of $G$ have degrees bounded by a constant.
We give the first exact algorithm for this problem;
it requires only $\OO(|V(G)|)$ time for any constant $k$.
%
\end{abstract}


\clearpage

\section{Introduction}

The crossing number $\crg(G)$ of a graph $G$ is the minimum number of pairwise 
edge crossings in a drawing of $G$ in the plane. Finding the crossing number of a 
graph is one of the most prominent combinatorial optimization problems in graph 
theory and is NP-hard already in very restricted cases, e.g., even when 
considering a planar graph with one added
edge~\cite{DBLP:journals/siamcomp/CabelloM13} (cf.~the MEI problem for $k=1$ later). 
The problem has been vividly investigated for over 60 years, but there is still surprisingly little 
known about it; see e.g.~\cite{survscha} for an extensive reference.
While in general, there exists a $c>1$ such that the crossing number cannot be
approximated within a factor $c$ in polynomial
time~\cite{DBLP:journals/dcg/Cabello13}, 
several approximation algorithms arose for special graph classes. 

For general graphs with bounded degree, there is an algorithm that approximates 
the quantity $n+\crg(G)$ instead, giving an approximation 
ratio of $\OO(\log^2 n)$~\cite{DBLP:journals/jcss/BhattL84,egs}.
A sublinear approximation factor of $\tilde{\OO}(n^{0.9})$ for $\crg(G)$ in 
the bounded-degree setting was given in an involved
algorithm~\cite{stoccrossing}. We know constant factor approximations for 
bounded-degree graphs that are embeddable in some higher 
surface~\cite{ghls,torusHS,surfaceApprox}, or that have a small set of graph 
elements whose deletion leaves a planar graph---removing and re-inserting these 
elements can give strong approximation bounds such 
as~\cite{HS06,DBLP:journals/algorithmica/CabelloM11,apex,sodamultiedge}. 

In this paper, we follow the latter idea and concentrate on the \emph{Multiple 
Edge Insertion} problem $\MEI(G,F)$, to be formally defined in the section hereafter. 
Intuitively, we are given a planar graph $G$, and ask for the best way (in terms 
of total crossing number) to planarly draw $G$ and insert a set of new edges $F$ 
into $G$ such that the final drawing of $G+F$
(i.e., of the graph including the new edges of $F$) restricted to $G$ remains planar. 

This problem is polynomial-time solvable for $|F|=1$~\cite{GMW05} and in the
case when all edges of~$F$ are incident to a common vertex~\cite{insertvertex}, 
but NP-hard for general $F$~\cite{zieglerTh}.
Moreover, an exact or at least approximate MEI solution constitutes 
an approximation for the crossing number of the graph $G+F$~\cite{apex}. 
Considering constant $k:=|F|$, there have been two different approximation 
approaches~\cite{sodamultiedge} and~\cite{CHicalp}; the former one directly targets the 
crossing number and achieves only a relative approximation guarantee for MEI; 
the latter one first specifically attains an approximation of MEI with only an
additive error term, and then uses \cite{apex} to deduce a crossing number 
approximation. While the former one is not directly practical, 
the latter algorithm~\cite{CHicalp} in fact turns out to be one of the best choices to obtain strong 
upper bounds in practice~\cite{DBLP:journals/jgaa/ChimaniG12}.

In this paper, we show that for every constant $k$ and under mild
connectivity assumptions, there is an exact linear time algorithm, 
which has so far been an open problem even for $k=2$.
In terms of parameterized complexity, our algorithm is in FPT with 
the parameter~$k=|F|$.
\begin{thm}\label{thm:main}
Let $G$ be a planar connected graph on $n$ vertices, and $F$ a set of $k$
new edges (vertex pairs, in fact) where $k$ is a constant.  If $G$ is
biconnected, or the maximum degree of the cut vertices of $G$ is bounded by a
constant, then the problem $\MEI(G,F)$ is solvable in $\OO(n)$ time.
\end{thm}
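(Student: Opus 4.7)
The plan is to combine the SPQR-tree machinery used in earlier MEI work with a fixed-parameter dynamic programming over that tree. The overall structure mirrors the single-edge insertion algorithm of Gutwenger--Mutzel--Weiskircher, but instead of independently computing a shortest path in a dual for each edge of $F$, we must jointly optimise all $k$ routings, since different $F$-edges interact whenever they pass through a common skeleton face. Because $k$ is a constant, the joint state of the $k$ routings at any ``cut'' of the decomposition is of constant size, which is what makes the dynamic program run in linear time.

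First I would reduce the connected case to the biconnected one. Consider the block-cut tree $\mathcal{B}$ of $G$; every $f=uv\in F$ traverses a unique path in $\mathcal{B}$ from the block containing $u$ to the block containing $v$, entering and leaving each intermediate block through one of the wedges around a cut vertex. If each cut vertex $c$ has degree at most a constant $\Delta$, the number of wedge-choices a single $F$-edge can make at $c$ is $O(\Delta)$, so in total there are $O(\Delta^{2k}\cdot|V(\mathcal{B})|^k)=O_k(n^k)$ global routing patterns; we can in fact avoid enumerating all of these by a second DP over $\mathcal{B}$ itself, storing as state the multiset of wedge-assignments of the at most $k$ $F$-edges that pass through each tree edge. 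This reduces the problem to solving a constant number of independent biconnected MEI instances, one per block, with prescribed entry/exit vertices.

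For a biconnected block $B$, I would compute its SPQR-tree $T_B$ in linear time and root it arbitrarily. For every virtual edge $e^*$ of a skeleton, the ``interface'' needed to glue routings from the two sides consists, for each of the at most $k$ $F$-edges that crosses $e^*$, of a constant amount of information: which of the two faces flanking $e^*$ in the local embedding the routing uses, and (for P-nodes) its cyclic position among the other passing edges. This gives a state space of size $f(k)$ per tree edge. The DP value $D_\mu(\sigma)$ is the minimum number of crossings incurred inside the subtree rooted at node $\mu$, conditioned on the interface $\sigma$ at the parent virtual edge; it is computed bottom-up by combining children's values with the solution to a purely local problem inside $\mathrm{skel}(\mu)$.

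The main obstacle is this local problem: given the skeleton of a node $\mu$, a combinatorial embedding (fixed up to reflection for R-nodes, fixed up to permutation of virtual edges around the two poles for P-nodes, and trivial for S-nodes), and for every virtual edge a prescribed interface pattern, insert the (at most $k$) $F$-edges that must pass through or terminate in $\mathrm{skel}(\mu)$ with the minimum total number of crossings. For S-nodes this is immediate. For P-nodes we enumerate the $O(k!)$ cyclic orders of the $k$ relevant virtual edges around each pole and for each order solve the resulting fixed-embedding subproblem. For R-nodes the embedding is essentially unique, so we must compute, in a 3-connected planar graph with a fixed embedding, an optimal joint routing of $k$ new edges between prescribed face pairs; since $k$ is constant, this reduces to finding a constant number of dual shortest paths together with a constant-sized case analysis of their pairwise interactions, which can be done in time linear in the skeleton size. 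Summing the linear local work over all nodes of $T_B$ and all blocks of $\mathcal{B}$ yields the claimed $\OO(n)$ running time; the hidden constant is a function of $k$ (and $\Delta$), matching the FPT formulation.
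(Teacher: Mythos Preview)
Your high-level structure---SPQR-tree dynamic programming calling a local fixed-embedding insertion routine at each node---matches the paper's approach. However, there are two genuine gaps.

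\textbf{The R-node local problem is the whole difficulty.} You write that for an R-node, where the embedding is essentially fixed, the joint routing of $k$ new edges ``reduces to finding a constant number of dual shortest paths together with a constant-sized case analysis of their pairwise interactions.'' This is precisely the rigid MEI problem, which is NP-hard for unbounded $k$, and you have not explained why it is FPT in $k$. Taking individual shortest dual paths and then ``case-analysing'' their crossings does not work: an optimal solution may route an edge $f\in F$ far from its shortest dual path in order to avoid crossing other $F$-edges, and the space of relevant detours is a priori unbounded. The paper devotes all of Section~3 to this subproblem: it builds a \emph{trinet} (a triangulation of the obstacle set $V(F)$ by locally- and globally-shortest transversing paths), encodes each route combinatorially by its \emph{$T$-sequence}, proves the key structural Lemma that in an optimal solution no triedge is crossed more than $8k^4$ times (so only $2^{\poly(k)}$ $T$-sequences need be enumerated), and then for each combination of $T$-sequences computes shortest routes in ``sleeves'' and detects forced crossings via \emph{crossing certificates}. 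None of this machinery is hinted at in your proposal, and without it the R-node step does not go through.

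\textbf{The interface state is too small.} You assume each $F$-edge crosses a given virtual edge $e^*$ at most once, so the interface per $F$-edge is ``which of the two faces flanking $e^*$ the routing uses.'' In fact a single $F$-edge may pass through a pertinent graph $P_\mu$ many times (entering and leaving repeatedly), and bounding this multiplicity by a function of $k$ alone again requires the structural lemma above. The paper's DP therefore stores, for each dirty virtual edge, all choices of how many times each $F$-edge enters, on which side, and in which order---giving $\poly(k)!$ rather than $O(1)$ subproblems per node.

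The block--cut reduction sketch is closer in spirit to the paper's, though there too one must account for nestings of blocks at a cut vertex, which is where the dependence on the cut-vertex degree $\Delta_{\mathrm{cut}}$ enters.
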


We also mention that while the crossing number itself is in FPT w.r.t.\ the objective 
value~\cite{G,KR}, already a planar graph with one added
edge may have unbounded crossing number.

Both the aforementioned absolute MEI-approximation~\cite{CHicalp} and our new approach
can use \cite{apex} to obtain the same relative ratio for approximating the crossing number. 
However, our new approach does so without any additional additive term:

\begin{cor}
 Using the theorem relating an optimum $\MEI(G,F)$ solution 
to the crossing number of the graph~$G+F$~\cite{apex}, 
 Theorem~\ref{thm:main} gives a polynomial time
$k\Delta$-approximation for the crossing number of $G+F$ with constant $k=|F|$, 
where $G$ is a planar graph and $\Delta$ is its maximum degree.
\end{cor}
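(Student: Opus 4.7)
The plan is to compose Theorem~\ref{thm:main} with the structural result of \cite{apex} in a direct way. First, I would apply Theorem~\ref{thm:main} to compute, in $\OO(n)$ time, an exact optimum of $\MEI(G,F)$; let me write $\MEI(G,F)$ also for the value of this optimum. Since any feasible MEI drawing is in particular a drawing of $G+F$, we immediately get the easy inequality
\[
\crg(G+F) \;\le\; \MEI(G,F).
\]

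Next, I would invoke the theorem from \cite{apex} that controls the reverse direction: for a planar graph $G$ of maximum degree $\Delta$ and a set $F$ of $k$ new edges,
\[
\MEI(G,F) \;\le\; k\Delta\cdot\crg(G+F).
\]
This is exactly the ingredient named in the corollary statement, and it is where the $k\Delta$ factor originates: given an optimum drawing of $G+F$, one reads off a planar embedding of $G$ compatible with that drawing, and re-routes the $k$ edges of $F$ one at a time, paying at most $\Delta$ crossings per existing crossing. I would simply cite this rather than reprove it.

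Combining the two inequalities, the exact MEI value produced by Theorem~\ref{thm:main} is a $k\Delta$-approximation of $\crg(G+F)$, and for constant $k$ it is delivered in $\OO(n)$ time, well within polynomial time. I do not anticipate a genuine technical obstacle: the corollary is really a two-line composition of the new FPT algorithm with an existing bound. The only subtlety worth emphasizing, and the actual improvement over \cite{CHicalp}, is the absence of an additive error term: the earlier approach inflates the $k\Delta\cdot\crg(G+F)$ bound by an additive contribution because its MEI subroutine is only additively approximate, whereas Theorem~\ref{thm:main} returns the true MEI optimum, so the final ratio is a clean multiplicative $k\Delta$.
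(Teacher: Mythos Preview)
Your proposal is correct and matches the paper's approach: the corollary is stated without proof in the paper, as an immediate composition of Theorem~\ref{thm:main} with the cited bound from~\cite{apex}, exactly as you outline. The only point worth noting is that for a general connected planar $G$ (not biconnected), the linear-time guarantee of Theorem~\ref{thm:main} requires bounded cut-vertex degree; however, the detailed running time $\OO(n\cdot\Delta_\mathrm{cut}^{\Theta(k)}\cdot(\poly(k)!)^{\Theta(k)})$ is still polynomial for constant~$k$, so the corollary's claim of \emph{polynomial} time goes through regardless.
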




\paragraph{Organization.}
After formally defining our setting in the next section, 
we will concentrate on the still NP-hard problem \emph{Rigid MEI} in Section~\ref{sec:rigidMEI},
i.e., MEI under the restriction that the planar embedding of $G$
is fixed.
In Section~\ref{sec:nonrigid}, this algorithm is at the core of a dynamic programming over 
a decomposition tree of $G$, in order to obtain an FPT algorithm for the general MEI, i.e., when any planar embedding of $G$ is allowed.
The latter constitutes the result for Theorem~\ref{thm:main}.
\ifaccumulating{Due to restricted space, statements with proofs left for the appendix
are marked with an asterisk ~\apxmark$\!\!$.}


\section{Preliminaries}\label{sec:prelim}
\begin{accumulate}
\ifaccumulating{\subsection{Supplements for Section~\ref{sec:prelim}}}
\end{accumulate}

We use the standard terminology of graph theory. By default, we use the term 
{\em graph} to refer to a {loopless multigraph}, i.e., we allow parallel 
edges but no self-loops. If there is no danger of confusion, we denote an edge 
with the ends $u$ and $v$ chiefly by $uv$.

A \emph{drawing} of a graph $G=(V,E)$ is a mapping of the vertices $V$ to 
distinct points on a surface $\Sigma$, and of the edges $E$ to simple (polygonal) 
curves on $\Sigma$, connecting their respective end points but not containing any other 
vertex point. Unless explicitly specified, we will always assume $\Sigma$ to be 
the plane (or, equivalently, the sphere). A \emph{crossing} is a common point of 
two distinct edge curves, other than their common end point. Then, a drawing is 
\emph{plane} if there are no crossings. \emph{Plane embeddings} form equivalence 
classes over plane drawings, in that they only define the cyclic order of the 
edges around their incident vertices (and, if desired, the choice of the outer, 
infinite face). A \emph{planar} graph is one that allows a plane embedding. A 
\emph{plane} graph is an embedded graph, i.e., a planar graph together with a 
planar embedding.

Given a drawing $D$ of $G$, let $\crd(D)$ denote the number of 
pairwise edge crossings in $D$. The \emph{crossing number} problem asks for a 
drawing $D^\circ$ of a given graph $G$ with the least possible number 
$\crg(D^\circ)=:\crg(G)$.
By saying ``pairwise edge crossings'' we emphasize that we count a crossing point $x$ 
separately for every pair of edges meeting in $x$ (e.g., $\ell$ edges meeting in 
$x$ give ${\ell\choose2}$ crossings). 

\begin{defn}[Multiple edge insertion, MEI and rigid MEI]
\label{def:MEI}~\\
Consider a planar, connected graph $G$ and a set of edges 
(vertex pairs, in fact) $F$ not in~$E(G)$.
We denote by $G+F$ the graph obtained by adding $F$ to the edge set of~$G$.

Let $G_0$ be a planar embedding of $G$.
The \emph{rigid multiple edge insertion} problem $\rMEI(G_0,F)$ is to find 
a drawing $D$ of the graph $G+F$ with minimal $\crd(D)$ such that the restriction 
of $D$ to $G$ is the plane embedding $G_0$. 
The attained number of crossings is denoted by $\rINS(G_0,F)$.

The \emph{multiple edge insertion} problem $\MEI(G,F)$ is to find an
embedding $G_1$ of $G$ (together with the subsequent drawing $D$ as above),
for which $\rMEI(G_1,F)$ attains the minimum number of crossings.
The latter is denoted by $\INS(G,F)$.
\defend\end{defn}

Herein, we will also deal with the \emph{weighted} crossing number, i.e., we have 
edge weights $w\colon E(G)\to\mathbb{N}_+\cup\{\infty\}$, and a crossing between
two edges $e_1,e_2$ accounts for the amount of $w(e_1)\cdot w(e_2)$ in the above 
crossing functions.
Specially, for the MEI problem variants, we shall consider integer weights
on the edges of $G$ but not on $F$ (i.e., the weight on $F$ is always~$1$).
Although this is not a noteworthy strengthening of Theorem~\ref{thm:main} by
itself, the weights on $E(G)$ will be useful in the recursive processing of the
non-rigid case, cf.~Section~\ref{sec:nonrigid}.


Given a plane embedding $G_0$ of $G$, we define its (geometric) \emph{dual} 
$G_0^*$ as the embedded multigraph that has a (dual) vertex for each face in 
$G_0$; dual vertices are joined by a (dual) edge for each (primal) edge shared 
by their respective (primal) faces. The \emph{weight} of a primal edge gives rise to the \emph{length} (of same value) of its dual edge.
The cyclic order of the (dual) edges around 
any common incident (dual) vertex~$v^*$, is induced by the cyclic order of the 
(primal) edges around the (primal) face corresponding to~$v^*$.

We refer to a path/walk in $G_0^*$ as to a \emph{dual path/walk} in $G_0$,
and we speak about a {\em dual path/walk $\pi$ in $G_0$ between vertices $u,v$} if the
$\pi$ starts in a face incident with $u$ and ends in a face incident with~$v$.
We shortly say a {\em route from $u$ to $v$} (a {\em$u$--$v$ route})
to mean a dual walk between vertices $u,v$.

\begin{accumulate}
For any drawing $D$, let $\crdr D(X,Y)$ denote the number of crossings 
between edges of $X$ and edges of~$Y$ in $D$, and let $\crdr D(X):=\crdr D(X,X)$.
%
It is well established that the search for an 
optimal solution to the crossing number problem can be restricted to
so-called \emph{good} drawings: any pair of edges crosses at most once, adjacent 
edges do not cross, and there is no point that is a crossing of 
three or more edges. 
A simple extension of this finding to the setting of MEI is presented next,
in Lemma~\ref{lem:c+kch2}.

The following technical results will be used to restrict 
how ``complicated'' drawings of the edges of $F$ may look in an optimal
solution of a $\MEI(G,F)$ or $\rMEI(G,F)$ instance.
Note that, although both the claims are formulated for the rigid version,
they easily imply the same for the ordinary (non-rigid) MEI problem.
\begin{lem}\label{lem:c+kch2}
Consider a (weighted) instance $\rMEI(G,F)$ of the rigid MEI problem.
In any optimal solution of $\rMEI(G,F)$, any two edges of $F$ cross
at most once, and they have no crossing if they share a common endvertex.
Moreover, if the weights of the edges in $F$ equal~$1$ and 
there exists a drawing $D$ of $G+F$ such that
$\crdr{D}(E(G),F)=c$, then $\rINS(G,F)\leq c+{|F|\choose2}$.
\end{lem}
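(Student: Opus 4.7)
My plan is to prove all three assertions by standard arc-swap (\emph{uncrossing}) operations applied exclusively to the edges of~$F$. The crucial features of such a swap are that (i)~it leaves the subdrawing of~$G$ untouched, so the rigidity over~$G_0$ is preserved, and (ii)~because the union of the two swapped arcs covers exactly the same point set before and after the swap, any other edge~$e$ crosses this union the same total number of times. Thus a swap preserves both $\crdr{}(E(G),F)$ and the crossings of $\{f_1,f_2\}$ with any third $F$-edge, while removing the offending crossing(s). Parts~1 and~2 then follow by contradiction from optimality, and part~3 by using the same operations as a constructive procedure on the given drawing~$D$.

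For parts~1 and~2 I would let $D$ be an optimal rigid solution and suppose that two edges $f_1,f_2\in F$ either meet in two distinct crossings $x_1,x_2$ (part~1) or share an endvertex~$v$ and cross at a point~$x$ (part~2). In each case, denote by $\alpha_i\subseteq f_i$ the subarc of $f_i$ between $x_1$ and $x_2$ (resp.\ between $v$ and~$x$), and swap $\alpha_1$ with~$\alpha_2$: reroute $f_1$ to follow $\alpha_2$ through the swapped segment, and $f_2$ to follow~$\alpha_1$. After removing the resulting tangencies by an arbitrarily small local perturbation, the observation above shows that the total weighted crossing count strictly decreases by at least $w(f_1)w(f_2)\ge 1$, contradicting the optimality of~$D$.

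For part~3 I would start from the given drawing $D$ with $\crdr{D}(E(G),F)=c$ and iteratively apply the above swaps whenever two $F$-edges cross more than once or whenever two adjacent $F$-edges cross. Each step strictly decreases $\crdr{}(F)$ by at least one while preserving both $\crdr{}(E(G),F)$ and the subdrawing of~$G$, so the process terminates at a drawing $D''$ in which each pair of $F$-edges crosses at most once and no adjacent pair crosses. Since the weights on~$F$ equal~$1$, this yields $\crdr{D''}(F)\le\binom{|F|}{2}$; combined with $\crdr{D''}(E(G),F)\le c$ and $\crdr{D''}(E(G))=0$ (the latter by rigidity over the plane embedding~$G_0$), we obtain $\rINS(G_0,F)\le \crdr{D''}\le c+\binom{|F|}{2}$. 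The main technical point, routine but worth checking carefully, is precisely the invariance of $E(G)$--$F$ crossings under the swap together with the local perturbation needed to remove the new tangencies cleanly; once this is in hand, everything reduces to counting.
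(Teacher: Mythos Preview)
Your proposal is correct and follows essentially the same approach as the paper's own proof, which simply invokes the ``folklore arc exchange argument'' for the first two assertions and then observes that, after such exchanges, unit weights on $F$ force $\mathrm{cr}(F,F)\le\binom{|F|}{2}$. You have merely spelled out in detail what the paper compresses into two sentences.
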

\begin{proof}
The proof simply repeats, for this special case of rigid MEI,
the folklore ``arc exchange'' argument from the crossing number theory.
For the second claim, we observe that since the edge weights in $F$ are
all~$1$, it is $\crdr{D}(F,F)\leq{|F|\choose2}$.
\end{proof}

\begin{cor}\label{cor:decreasef}
Consider a (weighted) instance $\rMEI(G,F)$ such that the weights of the
edges in $F$ equal~$1$, and let $f\in F$.
Assume that $D_1$ and $D_2$ are two drawings of $G+F$ such that
$D_1-f$ is identical to $D_2-f$,
and that $\crdr{D_1}(E(G),\{f\})\,-\,\crdr{D_2}(E(G),\{f\})>{|F|\choose2}$.
Then $\crd(D_1)>\rINS(G,F)$,
i.e., $D_1$ is not an optimal solution of $\rMEI(G,F)$.
\end{cor}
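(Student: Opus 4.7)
The plan is to show that $D_1$ has strictly more crossings than the upper bound on $\rINS(G,F)$ supplied by Lemma~\ref{lem:c+kch2} when that lemma is applied to $D_2$. The key point is that the hypothesis $D_1-f = D_2-f$ makes all crossings not involving $f$ identical in the two drawings, so the difference between $\crd(D_1)$ and the bound coming from $D_2$ is almost entirely the gap $\crdr{D_1}(E(G),\{f\}) - \crdr{D_2}(E(G),\{f\})$, which is assumed to be strictly larger than $\binom{|F|}{2}$.

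First I would decompose
\[
  \crd(D_1) \;=\; \crdr{D_1}\bigl(E(G)\bigr) + \crdr{D_1}\bigl(E(G),F\bigr) + \crdr{D_1}(F,F).
\]
Because $D_1$ is a valid solution of $\rMEI(G,F)$, the subdrawing of $G$ is plane, so $\crdr{D_1}(E(G))=0$; discarding the non-negative term $\crdr{D_1}(F,F)$ yields the lower bound $\crd(D_1) \geq \crdr{D_1}(E(G),F)$. Splitting this last quantity as $\crdr{D_1}(E(G),F{-}f) + \crdr{D_1}(E(G),\{f\})$ and using $D_1-f=D_2-f$ to replace the first summand by $\crdr{D_2}(E(G),F{-}f)$, I obtain
\[
  \crd(D_1) \;\geq\; \crdr{D_2}\bigl(E(G),F\bigr) \;+\; \bigl[\crdr{D_1}(E(G),\{f\}) - \crdr{D_2}(E(G),\{f\})\bigr].
\]

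Next I would apply Lemma~\ref{lem:c+kch2} directly to $D_2$, which is itself a valid drawing of $G+F$ with the prescribed rigid embedding of $G$: this gives $\rINS(G,F) \leq \crdr{D_2}(E(G),F) + \binom{|F|}{2}$. Plugging the hypothesized strict inequality $\crdr{D_1}(E(G),\{f\}) - \crdr{D_2}(E(G),\{f\}) > \binom{|F|}{2}$ into the displayed bound then yields $\crd(D_1) > \crdr{D_2}(E(G),F) + \binom{|F|}{2} \geq \rINS(G,F)$, exactly as claimed.

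There is no real obstacle here; the argument is pure bookkeeping. The only point one has to be careful about is that the potentially large term $\crdr{D_1}(F,F)$ is simply dropped (we only need a lower bound on $\crd(D_1)$), while on the $D_2$ side one must not ignore its analogue $\crdr{D_2}(F,F)$ — Lemma~\ref{lem:c+kch2} handles this by the ``arc exchange'' slack of $\binom{|F|}{2}$, which is precisely the quantity the hypothesis is set up to beat.
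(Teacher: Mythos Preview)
Your proof is correct and follows essentially the same bookkeeping as the paper: apply Lemma~\ref{lem:c+kch2} to $D_2$, use $D_1-f=D_2-f$ to identify the $E(G)$--$(F\setminus\{f\})$ crossings in the two drawings, and then invoke the hypothesis to beat the $\binom{|F|}{2}$ slack. The only cosmetic difference is that the paper starts from $\rINS(G,F)$ and chains upward to $\crd(D_1)$, whereas you start from $\crd(D_1)$ and chain downward; note also that your appeal to $\crdr{D_1}(E(G))=0$ is unnecessary, since $\crd(D_1)\geq\crdr{D_1}(E(G),F)$ already follows from non-negativity of the dropped terms.
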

This claim might look rather weak at first sight,
with respect to the required large difference
$d:=\crdr{D_1}(f,E(G))-\crdr{D_2}(f,E(G))$.
However, one can actually easily construct examples in which $d=\Omega(|F|)$ and yet
$D_1$ is an optimal solution to $\rMEI(G,F)$.

\begin{proof}
Let $E=E(G)$, $k=|F|$ and $F'=F\setminus\{f\}$.
Using Lemma~\ref{lem:c+kch2}, we estimate
\begin{eqnarray*}
\rINS(G,F)&\leq& \crdr{D_2}(E,F) +{k\choose2} =
	\crdr{D_2}(E,F')+\crdr{D_2}(E,\{f\}) +{k\choose2}
\\ &=& \left[\, \crdr{D_1}(E,F')+\crdr{D_1}(E,\{f\}) \,\right]
	  -\crdr{D_1}(E,\{f\})+\crdr{D_2}(E,\{f\})+{k\choose2}
\\ &<& \crdr{D_1}(E,F)+0 \leq \crd(D_1)
\,.
\end{eqnarray*}
\end{proof}
\end{accumulate}

\section{Rigid MEI}\label{sec:rigidMEI}

In this section we give an FPT algorithm for solving the rigid version
$\rMEI(G,F)$, parameterized by $k=|F|$.
$G$ is hence a plane graph (i.e., with a fixed embedding) throughout this section.
Recall that the $\rMEI(G,F)$ problem is NP-hard~\cite{zieglerTh} for unrestricted~$k$.
%

We first illustrate the simple cases.
Solving $\rMEI(G,\{uv\})$, the fixed embedding edge insertion problem
with $k=1$, is trivial.
Augment dual $G^*$ with edges of length 0 between the terminals $u,v$ (technically, new vertices in $G^*$) 
and their respective incident faces (vertices in $G^*$),
to suit the above definition of a $u$--$v$ route in~$G$. 
\emph{Realizing} a route for $uv$ means to draw $uv$ along it within $G$.
If the shortest route has length $\ell$, 
realizing it attains $\rINS(G,\{vw\})=\ell$, the smallest number of crossings in the rigid MEI setting.


For $k\geq2$, the situation starts to be more interesting:
not every pair of shortest routes gives rise to an optimal solution
of $\rMEI(G,F)$ since there might arise a crossing between the two edges of~$F$.
The question, for $k=2$, is whether some pair of shortest routes 
of the two edges in $F$ can avoid crossing each other.
Since it is generally not feasible to enumerate all shortest routes, we
cannot check this by brute force and a more clever approach is needed.
Even worse, for larger values of $k$ we can encounter situations in which
optimal solutions of $\rMEI(G,F)$ draw edges of $F$
quite far from their individual shortest routes (in order to avoid
crossings with other edges of~$F$).

On a very high level, our approach to finding a drawing $D$ of $G+F$
that is an optimal solution to $\rMEI(G,F)$, can be described as follows:
\begin{enumerate}[(I)]\parskip0pt
\item \label{it:sch1}
We guess, for each pair $f,f'\in F$, whether $f$ and $f'$ will cross each other in $D$.
Since $k=|F|$ is a parameter, all the possibilities can be enumerated in FPT time.
\item 
Let $X\subseteq{F\choose2}$ be a (guessed) set of pairs of edges of~$F$.
We find a collection of shortest routes for the edges of $F$ in $G$ under the
restriction that exactly the pairs in $X$ cross;
$D$ is obtained by inserting the edges of $F$ along their computed routes.
As we will see, we may restrict our attention only to routes pairwise crossing at most
once.
\item \label{it:sch3}
We select $D$ which minimizes the sum of $|X|$ and of the lengths of the routes 
found above.
\end{enumerate}

\subsection{Handling path homotopy of routes}

The core task of the scheme \ref{it:sch1}--\ref{it:sch3} is to find a collection of shortest routes
under the restriction that every route avoids crossing certain other
routes (note; none of these routes are fixed in advance).
Although this problem may seem equivalent, in the dual, to the notoriously
hard problem of shortest disjoint paths in planar graphs
\cite{DBLP:journals/talg/VerdiereS11,DBLP:journals/disopt/KobayashiS10},
this is fortunately not the case since our routes may freely share their sections
as long as they do not cross.
We give a solution of the core task 
which is greedy in the sense
that each route of $F$ in~$G$ is minimized regardless of the other routes of~$F$.
The key to this solution is the concept of a {\em path homotopy} in the plane.

In a brief and rather informal topological view,
consider the sphere with a finite set of point obstacles.
Two simple curves $\alpha,\alpha'$ with the same endpoints are {\em homotopic}
if there exists a homeomorhpism (a continuous deformation) of $\alpha$ to
$\alpha'$ that fixes the endpoints and otherwise avoids all the obstacles.
For example, if $\alpha,\alpha'$ are disjoint except at the common ends, then
they are homotopic if and only if one of the two open regions bounded by
$\alpha\cup\alpha'$ is obstacle-free.
In our case, the {\em obstacles} are the ends $V(F)$ of the edges of $F$ 
(as given by the fixed embedding of~$G$), 
where each endpoint is ``blown up'' into a small open disc.
Then, given the homotopy classes $hom(\alpha),hom(\beta)$ of two curves
$\alpha,\beta$, one can decide whether $\alpha$ and $\beta$ are ``forced to
cross''---although, $\alpha$ and $\beta$ may cross if they are not forced
to, such an unforced crossing can as well be avoided in our case.

Instead of the above classical algebraic-topology setting of homotopies, 
in this paper we prefer to deal with path homotopy in a combinatorial setting.
This setting is closely inspired by the discrete-geometry view of
boundary-triangulated $2$-manifolds by Hershberger and 
Snoeyink~\cite{DBLP:journals/comgeo/HershbergerS94}.
In the first step, we ``triangulate'' the point set $V(F)$ (our obstacles)
using transversing paths in the embedding $G$.
A {\em transversing path} between vertices $x,y$ of $G$ is a path whose ends
are~$x,y$ and whose internal vertices subdivide some edges of~$G$.
Let $T$ be the union of these transversing paths and $G'$ denote the
corresponding subdivision of~$G$.
In order to avoid a terminology clash with graph triangulations,
we will call $T$ in the pair $(G',T)$ a {\em trinet of~$G$}.
Formally (where $V(F)=N$):

\begin{defn}[Trinet]\label{def:trinet}
Let $G$ be a connected plane graph and $N\subseteq V(G)$, $|N|\geq4$.
A plane graph $T$ such that $V(T)\cap V(G)=N$
is called a {\em trinet} of $G$ if the following holds:
\begin{enumerate}[a)]\parskip0pt
\item $T$ is a subdivision of a $3$-connected plane triangulation on the vertex set~$N$
(in particular, every face of $T$ is incident with precisely three vertices
of~$N$), and
\item there exists a subdivision $G'$ of $G$ such that
$V(G')\setminus V(G)=V(T)\setminus N$,~ $E(G')\cap E(T)=\emptyset$
and the union $G'\cup T$ is a plane embedding.
\end{enumerate}
The pair $(G',T)$ is a {\em full trinet} of~$G$.
The vertices in $N(T):=N$ are called {\em trinodes} of $T$, the maximal paths in $T$
internally disjoint from $N$ are {\em triedges} and their set is denoted by
$I(T)$, and the faces of $T$ are {\em tricells}.
Note that the triedges of $T$ are transversing paths of~$G$.
\ifnoaccumulating{We refer to Figure~\ref{fig:trinet-sketch} for a brief
illustration of this definition.}
\defend\end{defn}

Second, we focus on terms related to path homotopy in a full trinet $(G',T)$
of a plane graph~$G$.
Moreover, while we have implicitly perceived a route of $uv$ in $G$
(i.e., a dual walk from $u$ to $v$) as an arc drawn from $u$ to $v$,
we would also like to describe a topological ``alley'' for all
$u$--$v$ arcs of a similar kind (and same number of crossings)
in the embedding~$G'\cup T$.
With it we gain combinatorial abstraction and will later be able
to avoid unforced crossings with other routes.

\begin{defn}[Alley and $T$-sequence]\label{def:Tseq}
Let $(G',T)$ be a full trinet of a plane graph~$G$.
Consider a route $\pi$ between $u,v\in V(G)$ in the graph $G'\cup T$.
Then $V(\pi)=\{\phi_0,\phi_1,\dots,\phi_m\}$ where each dual vertex $\phi_i$
of $\pi$ is an open face of $G'\cup T$.
Let these faces $(\phi_0,\phi_1,\dots,\phi_m)$ be ordered along $\pi$ such that
$\phi_0$ is incident to $u$ and $\phi_m$ incident to $v$.
Let $(e_1,e_2,\dots,e_m)\subseteq E(G'\cup T)$ be the sequence of 
the primal edges of the dual edges of $\pi$, ordered from $\phi_0$ to $\phi_m$.
As a point set, each edge $e_i$ is considered without the endpoints.
\begin{enumerate}[a)]\parskip0pt
\item The union $\{u,v\}\cup\bigcup_{i=0}^m\phi_i\cup\bigcup_{i=1}^m e_i$
is called the {\em alley of $\pi$} (or, an {\em alley between $u,v$}).
\item 
Let $(e_1',\dots,e_\ell')\subseteq(e_1,e_2,\dots,e_m)$ be the restriction to $E(T)$,
and let $(p_1,p_2,\dots,p_\ell)\subseteq I(T)$ be the sequence of triedges
such that $p_i$ contains the edge $e_i'$ for~$i=1,\dots,\ell$.
Then $(p_1,p_2,\dots,p_\ell)$ is called the {\em $T$-sequence of $\pi$} from $u$ to $v$
(or, of the corresponding alley from $u$ to $v$).
\defend\end{enumerate}
\end{defn}

A route $\pi$ {\em crosses} a triedge $p$ if the alley of $\pi$ contains one of
the $G'$-edges forming~$p$.
The $T$-sequence of $\pi$ hence describes the unique order
(with repetition) in which its alley crosses the triedges of~$T$.
Usually, we shall consider only the case of $u,v\in N(T)$.


A route may, in general, cross the same triedge $q$ many times in one place
(switching ``there and back'').
However, such a situation may be easily smoothed down to one or no
crossing, and this can be formalized by the notion of reducing a $T$-sequence
as follows:
if $S=(p_1,p_2,\dots,p_\ell)$ is a $T$-sequence such that $p_i=p_{i+1}$
for some $1\leq i<\ell$, then the subsequence 
$S'=(p_1,\dots,p_{i-1},p_{i+2},\dots,p_\ell)$ is called a one-step reduction of $S$.
A subsequence $S^*\subseteq S$ is a {\em reduction} of $S$
(or {\em$S$ reduces to $S^*$})
if $S^*$ results from a sequence of one-step reductions of~$S$.
%

It comes as no surprise that $T$-sequences are closely related to the homotopy
concept:
\begin{rem}
Consider a trinet $T$ in the sphere.
One can show that two arcs with the same fixed endpoints are path-homotopic
(in the sphere with the obstacles formed by the trinodes of $T$)
if, and only if, their $T$-sequences can be reduced to the same subsequence.
However, since we are not going to directly use this fact,
we refrain from giving this as a formal statement in the short paper.
\end{rem}

\begin{accumulate}
\ifaccumulating{%
\subsection{Supplements for Section~\ref{sec:rigidMEI}}
}
\begin{figure}[p]
$$\includegraphics[width=0.65\hsize]{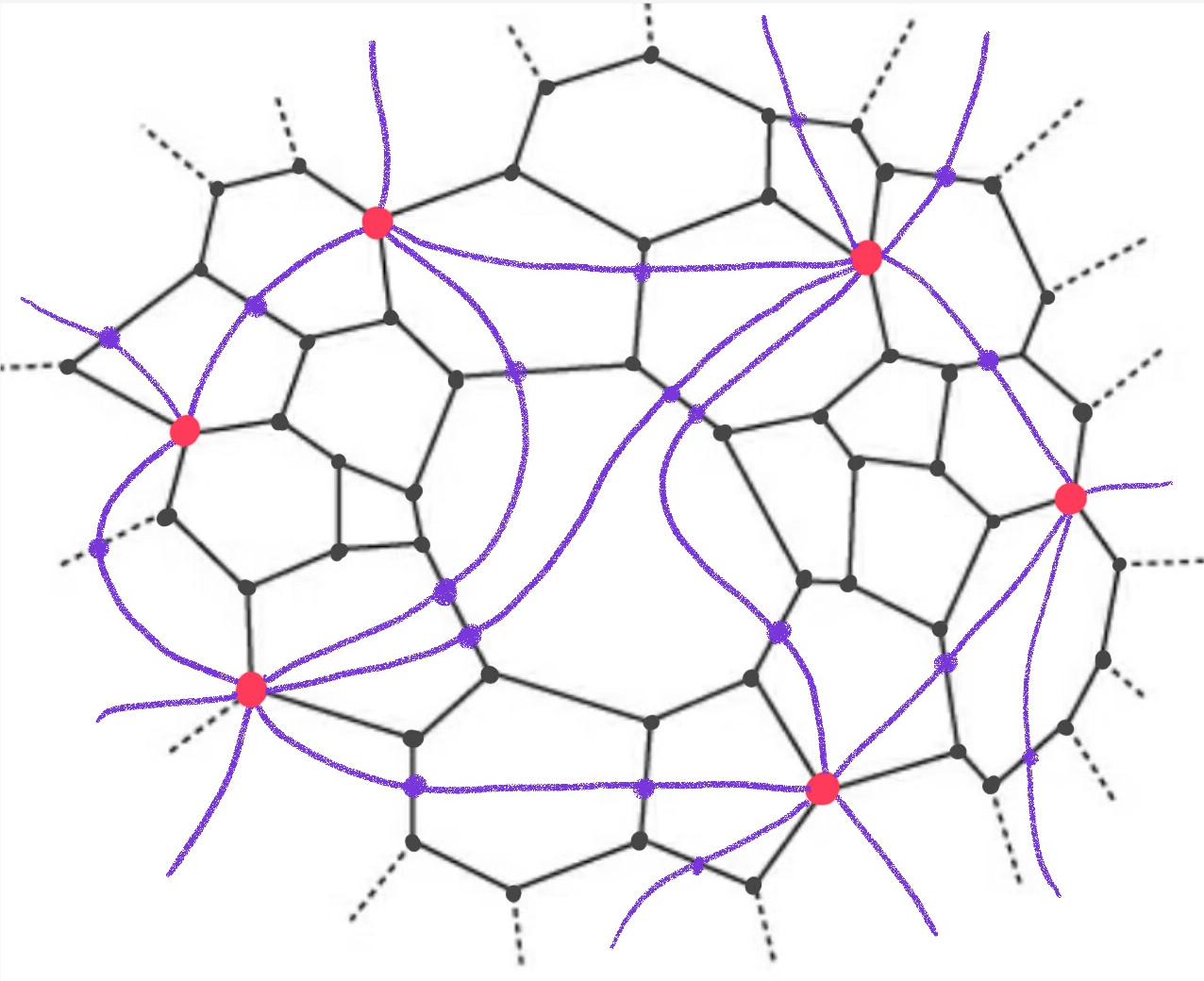}$$
\caption{An example of a trinet in a plane graph (see Definition~\ref{def:trinet}):
	the underlying plane graph $G$ is in black, the trinodes in thick
	red and the triedges in blue.}
\label{fig:trinet-sketch}
\end{figure}
\begin{figure}[p]
$$\includegraphics[width=0.65\hsize]{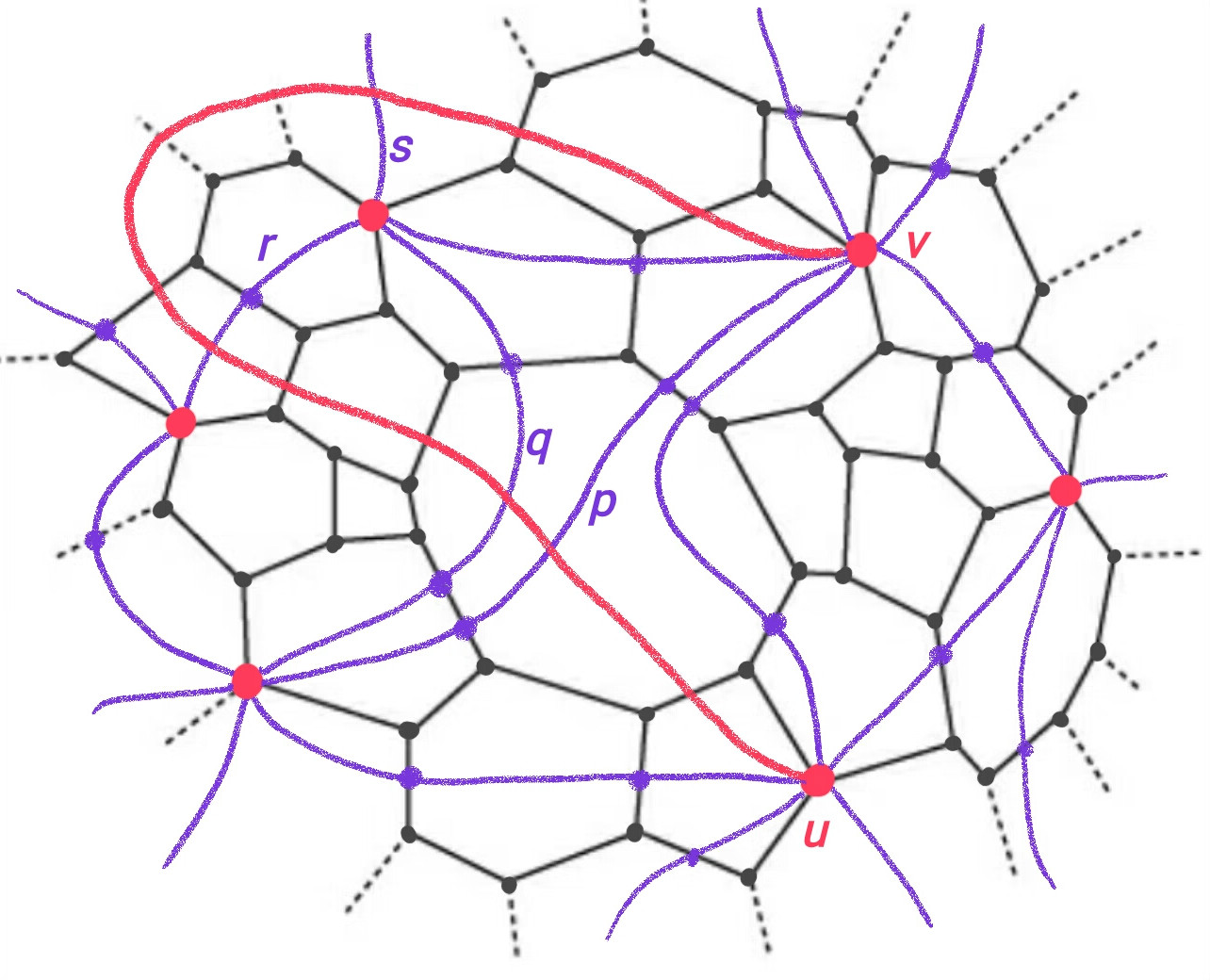}$$
\caption{An example (see Definition~\ref{def:Tseq}):
	the $T$-sequence of the $u$--$v$ route depicted in red
	is $(p,q,r,s)$ from $u$ to~$v$.
	It is a proper $T$-sequence from $u$ to~$v$
	(Definition~\ref{def:properTseq}).}
\end{figure}
\end{accumulate}


\subsection{$T$-sequences of potential shortest routes}

Our goal is to look for shortest routes in $G$ of a given homotopy, and
we slightly generalize the setting to allow for a connected plane graph $G$
with {\em edge weights} $w\colon E(G)\to\mathbb{N}_+\cup\{\infty\}$.
For a full trinet $(G',T)$ of~$G$,
we define the edge weights of $G'\cup T$ as follows:
$w'(p):=0$ for all $p\in E(T)$ and $w'(e'):=w(e)$
where $e'\in E(G')$ is obtained by subdividing~$e\in E(G)$.
This $w'$ is the {\em weight induced by $w$} in the trinet $(G',T)$.
We give the same weights $w'$ also to the edges of the geometric dual of
$G'\cup T$.
If $\alpha$ is the alley of a route $\pi$ between vertices $x,y$ in $G'\cup T$,
then the {\em length of $\alpha$} equals the length of $\pi$,
i.e., the sum of the $w'$-weights of the dual edges of~$\pi$.
%

With the help of the framework developed in the previous section,
we can now give an (again informal) high-level refinement of our solution steps
\ref{it:sch1}--\ref{it:sch3} of $\rMEI(G,F)$ as follows:
\begin{enumerate}[(I)]\parskip0pt
\setcounter{enumi}3
\item\label{it:onealley}
Consider a trinet $T$ of $G$ on the trinodes $V(F)$.
If we fix a (realizable) $T$-sequence $S$,
then we can use established tools, namely an adaptation of the idea of the funnel
algorithm~\cite{DBLP:conf/focs/Chazelle82,DBLP:journals/networks/LeeP84},
to efficiently compute a shortest alley among those 
having the same $T$-sequence~$S$.
For $uv\in F$ of weight~$1$, if we compute an alley $\alpha$ between $u,v$ of length
$\ell$, then we can easily draw the new edge $uv$ as an arc in $\alpha$ 
with $\ell$ weighted crossings.
%
\item\label{it:whichcross}
Suppose that, for $i=1,2$, $\alpha_i$ is a shortest alley between $x_i$ 
and $y_i$ having the $T$-sequence~$S_i$.
Then, as detailed later in Lemma~\ref{lem:noncrossing} and also
Claim~\ref{clm:indepcertif},
we can decide from {\em only $S_1,S_2$} whether there exist arcs from $x_1$ to
$y_1$ in $\alpha_1$ and from $x_2$ to $y_2$ in $\alpha_2$, which do not cross
(note that $\alpha_1\cap\alpha_2$ may be nonempty and yet there may exist
such a pair of non-crossing arcs).
Moreover, if the two arcs cross then it should be only once.
\item\label{it:boundTseq}
Consequently, it will be enough to loop through all ``suitable'' $T$-sequences
for every edge of $F$ and independently perform the steps 
\ref{it:onealley}, \ref{it:whichcross} for each combination of them, 
in order to get an optimal solution of $\rMEI(G,F)$ as in~\ref{it:sch3}.
The point is to bound the number of $T$-sequences that have to be considered,
in terms of only the parameter~$k=|F|$.
\end{enumerate} 

We first resolve the last point \ref{it:boundTseq} which is a purely mathematical
question.
In order to achieve the goal,
we will build a special trinet of $G$ along (at least locally) shortest dual paths 
between the trinodes in $G$ (Definition~\ref{def:locally-shortest}).
Then we will be able to restrict our attention to special $T$-sequences
of bounded length (Definition~\ref{def:properTseq} and Lemma~\ref{lem:shortTseq}).

\begin{defn}[Shortest-spanning trinet]\label{def:locally-shortest}
Let $(G',T)$ be a full trinet of a plane graph~$G$,
and let the weights $w'$ in $(G',T)$ be induced by weights $w$ in $G$.
For a triedge $q\in I(T)$, every internal vertex $t$ of $q$ is incident
with two edges $e,e'$ of $G'$ of weight $w'(e)=w'(e')$
which we call the weight of~$t$.
The {\em transversing weight of $q$} equals the sum of
the weights of the internal vertices of~$q$.

A triedge $q\in I(T)$ between trinodes $x,y$ is {\em locally-shortest}
if the transversing weight of $q$ is equal to the length 
of a shortest dual path $\pi$ in $G'\cup T$ between $x,y$,
such that $\pi$ is contained in(!) the union of the two tricells incident with~$q$.
Similarly, $q$ is {\em globally-shortest}
if the transversing weight of $q$ is equal to the dual distance between
$x,y$ in $G'\cup T$.

We say that $T$ has the {\em shortest-spanning property}
if every triedge in $I(T)$ is locally-shortest,
and there exists a subset of triedges $J\subseteq I(T)$ forming a connected
subgraph of $T$ spanning all the trinodes 
such that every triedge in $J$ is globally-shortest.
\defend\end{defn}


\begin{defn}[Proper $T$-sequence]\label{def:properTseq}
Consider a trinet $T$ and trinodes $u\not=v\in N(T)$.
A {\em nonempty} sequence $S=(p_1,p_2,\dots,p_m)\subseteq I(T)$ of triedges of $T$
(repetition allowed) is a {\em proper $T$-sequence from $u$ to $v$}
if the following holds:
$u$ is disjoint from $p_1$ but there exists a tricell $\theta_0$
incident with both $u$ and $p_1$,
$v$ is disjoint from $p_m$ but there exists a tricell $\theta_m$
incident with both $v$ and $p_m$, and each two consecutive 
triedges $p_i,p_{i+1}$ are distinct and incident to a common tricell
$\theta_i$ for $1\leq i<m$.
{\em Empty} $S$ is a {\em proper $T$-sequence from $u$ to $v$}
if $u,v$ are incident to a common tricell $\theta_0$.
\defend\end{defn}

Recalling that $T$ is a subdivision of a triangulated graph,
we immediately get the following:
\begin{claim}\label{cl:Tseq-tricells}
For every proper nonempty $T$-sequence $S$,
the sequence of tricells $(\theta_0,\theta_1,\dots,\theta_m)$
as in Definition~\ref{def:properTseq} is uniquely determined by $T$ and~$S$.
\qed\end{claim}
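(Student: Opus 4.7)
The plan is to derive uniqueness purely from the structural constraints inherited by $T$ from being a subdivision of a simple $3$-connected plane triangulation on trinode set $N$. Two facts are needed: each tricell is bounded by exactly three triedges incident with three distinct trinodes; and each triedge is incident with exactly two tricells. Both facts transfer directly from the underlying triangulation through the subdivision, since subdividing the edges does not alter the face structure.

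For the internal indices $1 \le i < m$, the triedges $p_i$ and $p_{i+1}$ are distinct by the definition of a proper $T$-sequence and must be incident to a common tricell $\theta_i$. I would argue that in a simple plane triangulation two distinct edges share at most one triangular face: if two triangular faces were bounded by the same pair of edges, their three incident vertices would coincide, forcing parallel edges and contradicting simplicity (and hence $3$-connectivity). Translating back to $T$, two distinct triedges share at most one incident tricell, so $\theta_i$ is uniquely determined by $p_i$ and $p_{i+1}$.

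For the endpoint tricell $\theta_0$, the two candidates are precisely the two tricells incident with $p_1$. These correspond to the two triangular faces meeting at the underlying edge of $p_1$; each such face has exactly one trinode distinct from the two endpoints of $p_1$ (its ``opposite'' trinode), and by the same simplicity argument as above these two opposite trinodes differ. Since $u$ is disjoint from $p_1$ by the definition of a proper $T$-sequence, $u$ must be the opposite trinode of $\theta_0$; this singles out one of the two candidates, so $\theta_0$ is uniquely determined by $u$ and $p_1$. The argument for $\theta_m$ is symmetric in $v$ and $p_m$.

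Combining these observations, each $\theta_i$ is pinned down uniquely by its neighbors in the data $(u,p_1,p_2,\dots,p_m,v)$, which yields the claim. I do not anticipate any real obstacle: the only thing to be careful about is that the uniqueness of the ``third'' trinode on a triedge really does follow from simplicity of the underlying triangulation, and this is immediate from Definition~\ref{def:trinet}~a).
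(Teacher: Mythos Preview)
Your argument is correct and is precisely the spelling-out of what the paper leaves implicit: the paper states the claim with an immediate \qed, prefaced only by ``Recalling that $T$ is a subdivision of a triangulated graph, we immediately get the following.'' Your three observations---that two distinct triedges share at most one tricell, and that the trinode opposite a triedge is distinct on its two incident tricells (both following from simplicity of the underlying $3$-connected triangulation on $|N|\ge4$ trinodes)---are exactly the content behind that ``immediately.''
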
\noindent
For {\em empty} proper $S$, a tricell $\theta_0$ from
Definition~\ref{def:properTseq} is not unique (there are two choices for
it). However, since any of the two choices of $\theta_0$ incident with both
$u,v$ will work for us in the same way in a shortest-spanning trinet,
we simply make an arbitrary deterministic choice of $\theta_0$ for
$S=\emptyset$ and extend the scope of Claim~\ref{cl:Tseq-tricells} 
also to empty proper $T$-sequences.

What follows is the crucial finding that makes our algorithm to work:

\begin{lem}\apxmark\label{lem:shortTseq}
Consider an instance $\rMEI(G,F)$ where $G$ is a connected
plane graph. 
Let $(G',T)$ be a full trinet of~$G$ having the short\-est-spanning property.
There exists a set $\{\pi_f: f\in F\}$ where $\pi_f$ for $f=uv$ is a
route in $G'\cup T$ between the trinodes $u,v$,
such that the following hold:
\begin{enumerate}[a)]\parskip0pt
\item
There exists an optimal drawing $D$ of $G+F$ with $\rINS(G,F)$ crossings
such that each edge $f\in F$ is drawn in the alley of~$\pi_f$,
and no two edges of $F$ cross each other more than once.
\item
The $T$-sequence $S_f$ of each $\pi_f$ is a proper $T$-sequence,
and no triedge occurs in $S_f$ more than $8k^4$ times where~$k=|F|$.
\end{enumerate}
\end{lem}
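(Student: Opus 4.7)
The plan is to produce the routes $\{\pi_f\}$ by starting from an arbitrary optimal drawing and successively ``canonicalizing'' it, first for part (a), then for the properness in (b), and finally for the multiplicity bound, which will be the main hurdle.

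\smallskip
\textbf{Starting point and part (a).}
Fix any optimal drawing $D_0$ of $G+F$ with $\crd(D_0)=\rINS(G,F)$. By Lemma~\ref{lem:c+kch2} we may assume that any two edges of $F$ cross at most once in $D_0$, and adjacent edges of $F$ do not cross. For each $f=uv\in F$, reading off the sequence of faces of $G'\cup T$ visited by the curve drawing $f$ and the edges it crosses yields a dual walk $\pi_f^{(0)}$ between $u$ and $v$. By construction the curve of $f$ lies in the alley of $\pi_f^{(0)}$, so (a) is already in place for this initial choice.

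\smallskip
\textbf{Reducing to proper $T$-sequences.}
If the $T$-sequence $S_f^{(0)}$ has an immediate repetition $p_i=p_{i+1}=q$, then between these two crossings the curve of $f$ stays in the closed union $R_q$ of the two tricells incident with $q$. Within $R_q$ we may re-route $f$ to cross $q$ once or not at all while keeping its endpoints of the sub-arc on the same faces; using the locally-shortest property of $q$, such a local rerouting does not increase the number of weighted crossings with $E(G)$. Iterating this reduction until stability, and simultaneously replacing $f$ in the drawing, we obtain routes $\pi_f^{(1)}$ whose $T$-sequences are proper (and, by the reduction, an optimal drawing $D_1$ still witnessing (a)).

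\smallskip
\textbf{Passing to a shortest realization.}
Inside each homotopy class (i.e., among routes whose $T$-sequence reduces to $S_f^{(1)}$), pick $\pi_f$ minimizing the length of its alley; this can only further decrease the crossings with $E(G)$, so we still have an optimal drawing $D$ witnessing (a). Now the goal is to show that after this minimization, every triedge appears at most $8k^4$ times in each $S_f$.

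\smallskip
\textbf{The multiplicity bound (main obstacle).}
The heart of the argument is the following: suppose towards contradiction that some triedge $q$ appears $N>8k^4$ times in some $S_f$, at positions $i_1<i_2<\dots<i_N$. Each consecutive pair $(i_j,i_{j+1})$ together with the segment of $q$ between the corresponding $G'$-edges defines a closed curve $L_j$ on the sphere, which bounds two open regions; since the endpoints $u,v$ lie on the boundary of the alley and all other trinodes are point obstacles, we can classify $L_j$ by the set $T_j\subseteq V(F)\setminus\{u,v\}$ of trinodes in its ``inner'' region, together with a coarse combinatorial datum recording which pair of $G'$-edges of $q$ are its feet and on which side of $q$ it loops. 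The crucial observation to establish is that:
\begin{enumerate}[(i)]\parskip0pt
\item If $T_j=\emptyset$ (the loop encloses no trinode), then the locally-shortest property of $q$, combined with the globally-shortest spanning set $J$ of triedges (which allows rerouting through a neighbourhood of $q$), provides a strictly shorter walk in the same homotopy class, contradicting minimality of $\pi_f$.
\item If $T_j=T_{j'}$ for some $j<j'$ and, moreover, the ``side'' and ``feet'' data of $L_j$ and $L_{j'}$ also agree, then the symmetric difference of these two loops is a null-homotopic loop: we can concatenate $\pi_f[i_j,i_{j+1}]$ with (the reverse of) $\pi_f[i_{j'},i_{j'+1}]$ along $q$ and obtain a shortcut that again contradicts minimality.
\end{enumerate}
Case (i) eliminates trivial loops. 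In case (ii) one counts the number of distinct combinatorial types: there are at most $2k-2$ choices for the enclosed trinode nearest the boundary (in a standard radial ordering), at most $2$ sides of $q$, and the ``feet'' data contribute a further factor that can be bounded by $(2k)^2$ after collapsing to the trinet-level description; a careful count gives fewer than $8k^4$ types. By pigeonhole, either some $L_j$ falls under (i) or some two fall under (ii); either way minimality is violated, so $N\le 8k^4$ as claimed.

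\smallskip
The three stages together yield routes $\{\pi_f\}$ satisfying (a) and (b) simultaneously. The expected main difficulty is case (ii) of the multiplicity bound: identifying the right notion of ``type'' of a loop so that its count is genuinely $O(k^4)$ and the pigeonhole conclusion cleanly produces a length-reducing shortcut via the shortest-spanning property of $T$.
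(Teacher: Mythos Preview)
There are real gaps both in the invariant you maintain and in the multiplicity argument. Your step-3 minimization (shortest route within the homotopy class) is the wrong one: the locally-shortest property of a triedge $q$ only gives $\le$, not $<$, so in case~(i) re-routing a trinode-free loop along $q$ need not strictly decrease length and you get no contradiction. The paper instead fixes an optimal $D$ that \emph{secondarily} minimizes the total $T$-sequence length; every re-routing it performs strictly shortens some $S_f$ while not increasing $\crd(D)$, and that is the contradiction actually used. More seriously, none of your re-routings (in steps 2--3 or in (i)/(ii)) address the interaction with the other edges of $F$: a local change to $f$ can create new $F$--$F$ crossings and destroy optimality of $D$. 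The paper controls this explicitly via Corollary~\ref{cor:decreasef} and a careful choice of the re-routing region.

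Your case~(ii) does not work as stated. Two loops $L_j,L_{j'}$ with the same enclosed trinode set and ``feet/side'' data do not in general admit a shortcut that stays in the same homotopy class or that is shorter; a simple arc can zigzag across $q$ so that every loop encloses the same single trinode, yet no local splice reduces anything. The paper's argument is structurally different: it first passes from the arbitrary triedge $q$ to a \emph{globally}-shortest triedge $p$ (since each non-contractible loop must cross the spanning set $J$, many repetitions of $q$ force many repetitions of some such $p$); global shortestness is essential so that re-routing any sub-arc of $f$ along $p$ never increases the $G$-crossings. It then bounds the number of \emph{switchbacks} of $f$ along $p$ by $\binom{k}{2}$ using Corollary~\ref{cor:decreasef}, and on a long switchback-free stretch locates two consecutive loops whose common face is trinode-free and whose $f$-boundaries are uncrossed by $F\setminus\{f\}$, so that $f$ (and any other $F$-edge passing through that face) can be simultaneously re-routed along $p$, strictly shortening $S_f$. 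Your pigeonhole on ``loop types'' does not capture this structure, and your claimed $8k^4$ count of types is not substantiated.
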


\begin{accumulate}
\ifaccumulating{\subsubsection*{Proof of Lemma~\ref{lem:shortTseq}}}
Note that optimality of the number $\rINS(G,F)$ ensures that, in a), no two edges of
$F$ cross in $D$ unless they are forced to (by their given alleys).

\begin{proof}
In the scope of this proof, we shall use the following special terminology
and notation.
For simplicity, we use the symbol $f$ both for an edge $f\in F$ and
for the arc representing $f$ in a specific drawing of $G+F$
(more generally, of $(G'+F)\cup T$).
We similarly consider a triedge $p\in I(T)$ also as the arc representing $p$ in~$G'$.
If $x,y$ are two points on any arc $b$, then let
$b[x,y]$ denote the section of the arc from $x$ to~$y$.

Consider the given shortest-spanning trinet and the corresponding 
plane embedded graph $G'\cup T$ with edge weights $w'$ induced by given $w$ of~$G$.
We will implicitly assume that every arc $a$ drawn in $G'\cup T$ avoids crossing
the vertices and $a$ intersects $G'\cup T$ in finitely many points, 
i.p., the embedding and the arcs may be restricted to polygonal lines.
For any arc $b$ with the ends $u,v$, 
we define the {\em $T$-sequence of $b$} from $u$ to $v$
as the sequence (with repetition) in which $b$ intersects the triedges of~$T$.
We define the {\em transversing weight of $b$}, shortly {\em t-weight}, 
as the sum of the $w'$-weights of the edges of $G'\cup T$ crossed by~$b$,
and denote it by $t_{w'}(b)$.

We choose an optimal drawing $D$ of $G'+F$ which, at the same time,
minimizes the combined length of the $T$-sequences of the edges of~$F$,
i.e., the number of crossings between $F$ and the trinet~$T$.
Recall from Lemma~\ref{lem:c+kch2}: any two edges of $F$ cross
at most once, and they have no crossing if they share a common endvertex.
For $f\in F$, let $S_f=(p_f^1,\dots,p_f^{m_f})$ be the $T$-sequence of~$f$
and let $x_f^1,\dots,x_f^{m_f}$, respectively, 
denote the points at which the arc of $f$ intersects the triedges of $T$.
The first task is to prove that each $S_f$ is a proper $T$-sequence.

We start with a stronger technical claim:
if, for some $f\in F$ and $j>i$, it is $p_f^i=p_f^j$ and the simple loop
$a:=p_f^i[x_f^i,x_f^j]\cup f[x_f^i,x_f^j]$ is contractible
(i.e., with no trinode inside), then we get a contradiction to the choice
of~$D$ above.
Indeed, we may assume that $f$ and $j>i$ are chosen such that $a$ encloses
minimal area in the drawing~$D$.
By the minimality of $a$, no triedge crosses the interior of $f[x_f^i,x_f^j]$
twice (all $p_f^{i+1},\dots,p_f^{j-1}$ are distinct).
However, since the interior enclosed by $a$ contains no trinode,
the previous implies that no triedge other than $p_f^i$ may intersect $a$,
and so $j=i+1$.
Consequently, since the triedge $p_f^i$ is locally shortest in $T$,
the t-weights of the considered section satisfy 
$t_{w'}\big(p_f^i[x_f^i,x_f^j]\big) \leq t_{w'}\big(f[x_f^i,x_f^j]\big)$.
If we re-route $f$ closely along $p_f^i[x_f^i,x_f^j]$ (without crossing $p_f^i$),
then this change does not increase the crossing number
by the inequality of t-weights, but the $T$-sequence of $f$ gets shorter
(see in Figure~\ref{fig:shorten-ex1}).
Hence, it contradicts our choice of $D$.

Now we get back to $S_f$ being a proper $T$-sequence.
If $S_f$ is empty, then the statement is trivial.
If $S_f$ contains consecutive repeated triedge $p_f^i=p_f^{i+1}$ for
some $1\leq i<m_f$, then the above contradiction directly applies.
Assume now that $f=uv$ and the triedge $p_f^1$ is incident with the
starting trinode~$u$.
Then we can apply the same contradiction to the contractible loop
$a:=p_f^1[u,x_f^1]\cup f[u,x_f^1]$.
The remaining properties of proper $T$-sequences follow trivially.

\medskip
The last and most difficult step is to prove that no triedge repeats
in $S_f$ too many times, for each $f\in F$.
Again, if $p_{f}^i=p_{f}^j=p'$ for $i\not=j$, then the simple loop
$a':=p'[x_{f}^i,x_{f}^j]\cup f[x_{f}^i,x_{f}^j]$ must be non-contractible,
and so separating some pair of trinodes of $T$ from each other.
In particular, by assumed connectivity of~$G'$, 
this implies that always $t_{w'}(a')\geq1$.
Since at most $2k-1$ globally-shortest triedges of $T$ span all the trinodes
by Definition~\ref{def:locally-shortest}, 
$a'$ (and consequently $f[x_{f}^i,x_{f}^j]$) must cross at least one of them.
Therefore, if a triedge $p'$ repeats in $S_{f}$ at least $8k^4$ times,
then there is a globally-shortest triedge $p$ of $T$ such that $p$ repeats
in $S_f$ at least $\ell\geq8k^4/(2k-1)>4k^3$ times.

Let $Y=(y_1,\dots,y_\ell)$ (a subsequence of $(x_f^1,\dots,x_f^{m_f})$)
be the ordered sequence of points in which the arc of $f$ intersects the arc
of the triedge~$p$.
We say that an index $i\in\{2,\dots,\ell-1\}$ is a {\em switchback} of $Y$
if $y_{i-1},y_{i+1}$ both lie on the same side of $y_i$ on~$p$.
Up to symmetry, let the points on $p$ be ordered such that
$y_{i+1}$ lies between $y_{i-1},y_i$.
Since $p$ is globally-shortest in $T$,
we get (now regardless of contractibility of the induced loops)
$$
t_{w'}\big(p[y_{i-1},y_i]\big) \leq t_{w'}\big(f[y_{i-1},y_i]\big)
,$$
and then
\begin{eqnarray*}
t_{w'}\big(f[y_{i-1},y_{i+1}]\big) &\geq&
	t_{w'}\big(p[y_{i-1},y_i]\big)+t_{w'}\big(f[y_i,y_{i+1}]\big)
\\	&=&
  t_{w'}\big(p[y_{i-1},y_{i+1}]\big)+ t_{w'}\big(p[y_{i+1},y_i]\big)
	+ t_{w'}\big(f[y_i,y_{i+1}]\big)
\\	&\geq&
  t_{w'}\big(p[y_{i-1},y_{i+1}]\big) +1
.\end{eqnarray*}

Hence, if we locally re-route $f$ along $p[y_{i-1},y_{i+1}]$,
then we save the amount of at least $1$ in the crossings of~$f$ with $E(G)$.
Note that this is not a contradiction to our choice of optimal drawing $D$
yet since the change may introduce many new crossings of $f$ with the rest
of~$F$.
However, we cannot have more than $k\choose2$ switchbacks in $Y$ or we get a
contradiction using Corollary~\ref{cor:decreasef}.
(Observe that this usage of the corollary here is the only reason why we 
are restricting ourselves to unweighted sets $F$.)

Since $\ell\geq4k^3$, there is a consecutive subsequence $Y'\subseteq Y$
of length $\ell'\geq\ell/{k\choose2}-1>8k$ without switchbacks.
Without loss of generality, we assume $Y'=(y_1,\dots,y_{\ell'})$.
Let $g_i:=f[y_i,y_{i+1}]$ and 
$g_i^\circ:=g_i\cup p[y_{i},y_{i+1}]$, for $i\in\{1,\dots,\ell'-1\}$.
As argued before, each $g_i^\circ$ is a simple loop separating some 
pair of trinodes of $T$.
Since no two edges of $F$ cross more than once,
there are at most $k-1$ indices $i\in\{1,\dots,\ell'-1\}$ such that
$g_i$ is crossed by another edge(s) of~$F$.

Let $x,y$ be the ends of the triedge~$p$.
Assume that we have $i\not=j\in\{1,\dots,\ell'-1\}$ such that
neither of $g_i^\circ,g_j^\circ$ separates $x$ from~$y$.
Let $Z_i\not=\emptyset$ denote the set of trinodes of $T$ that 
are separated by $g_i^\circ$ from $x,y$, and let $Z_j$ be defined analogously.
We claim that $X_i\cap X_j=\emptyset$.
If not, then---up to symmetry---$g_j^\circ$ is separated from $x,y$ by
$g_i^\circ$, except a possibly shared section of $p[y_{i},y_{i+1}]$.
The former is impossible by the Jordan curve theorem and the
latter would mean that there is a switchback between $i$ and $j$,
which is again a contradiction.
Since there are at most $2k-2$ pairwise disjoint nonempty possibilities
(e.g., singleton trinodes other than $x,y$) for the sets $X_i,X_j$,
at most $2k-2$ indices $i\in\{1,\dots,\ell'-1\}$ are such that
$g_i^\circ$ does not separate $x$ from~$y$.

Since $\ell'\geq8k$, there exists a set of indices
$J\subseteq\{1,\dots,\ell'-2\}$, $|J|\geq \ell'-2(k-1+2k-2)-2>2k$,
such that for every $j\in J$ both the arcs $g_j,g_{j+1}$
are not crossed by other edges of~$F$ and both $g_j^\circ,g_{j+1}^\circ$
separate $x$ from~$y$.
Let $f_0:=f[y_{1},y_{\ell'}]$ and $p_0:=p[y_{1},y_{\ell'}]$;
we get $Y'\subseteq p_0$ since there are no switchbacks in~$Y'$.
Observe also that $g_j^\circ\cap g_{j+1}^\circ=\{y_{j+1}\}$
since $f$ is not self-intersecting and there is no switchback in~$Y'$.
Hence, up to symmetry, $g_j^\circ$ separates $x$ from $g_{j+1}^\circ$,
and $g_{j+1}^\circ$ separates $g_j^\circ$ from~$y$.
It easily follows that $g_j^\circ\cup g_{j+1}^\circ$ forms
the boundary of an arc-connected region of 
$\mathbb R^2\setminus(f_0\cup p_0)$ (a {\em face of} $f_0\cup p_0$).
Since at most $2k-2$ of the faces of $f_0\cup p_0$ may contain a trinode of
$T$ other than $x,y$, there exists $j\in J$
such that, in addition to the above properties of $J$,
the face $\sigma$ bounded by $g_j^\circ\cup g_{j+1}^\circ$ contains no trinode
(see in Figure~\ref{fig:shorten-ex2}).

Our goal now is to re-route $f$ along $p[y_{j},y_{j+1}]$ (i.e.,
``replacing'' the part $g_j\subset f$).
Again, since $p$ is globally-shortest in $T$, this move does not increase
the number of crossings of $f$ with $E(G)$, and the $T$-sequence of $f$
gets shorter.
It remains to argue that we can avoid new crossings of $f$ with
$F\setminus\{f\}$.
If any $f'\in F$ crosses $p[y_{j},y_{j+1}]$ then, since $\sigma$ contains no
trinode, $f'$ has to leave $\sigma$ as well, and the only possibility
is across $p[y_{j+1},y_{j+2}]$ by the previous assumptions.
Consequently, such $f'$ can be re-routed along 
$p[y_{j},y_{j+2}]$, similarly to~$f$, and no crossing with $f$ is required
(see again in Figure~\ref{fig:shorten-ex2}).
Note, moreover, that even if two such edges $f',f''\in F$ cross each other
in $\sigma$, there is no problem and they will cross in their new routing in
the same way.
We have again reached a contradiction to our choice of~$D$.
\end{proof}

\begin{figure}[p]
$$\includegraphics[width=0.66\hsize]{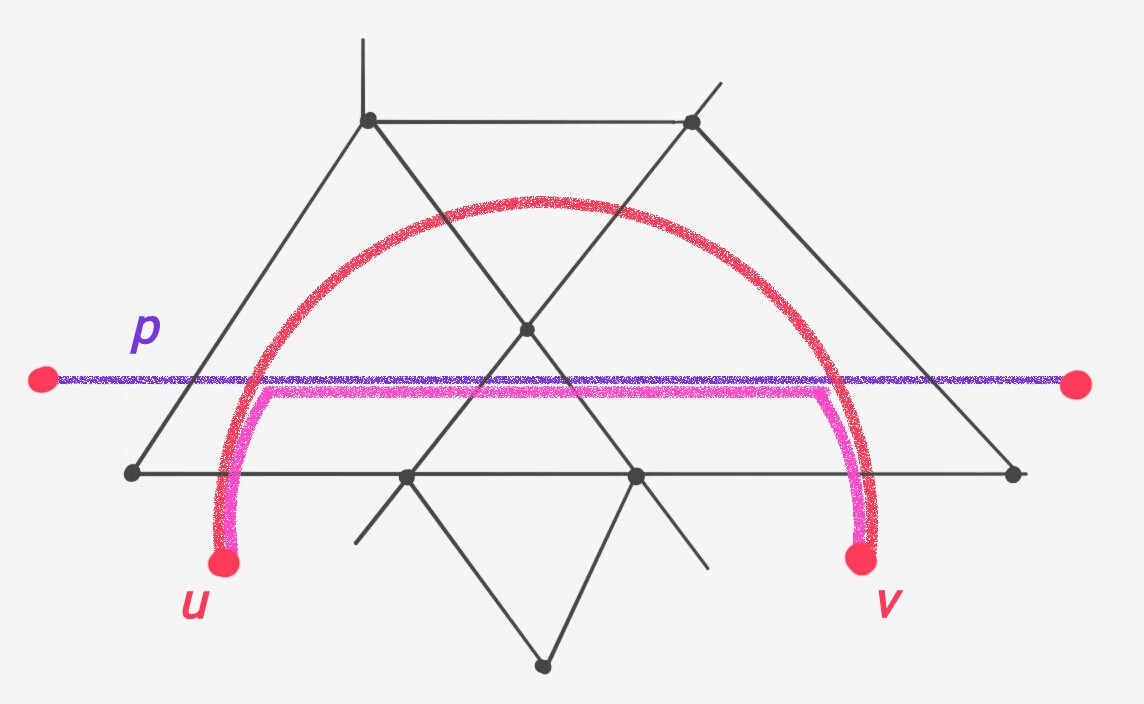}$$
\caption{Two consecutive crossings of the arc of $f=uv$ with a triedge~$p$
	(which is locally-shortest)
	determine a contractible loop, and so $f$ can be re-routed partly
	along $p$ without inducing more crossings with $G$ or with other
	edges of~$F$.}
\label{fig:shorten-ex1}
\end{figure}
\begin{figure}[p]
$$\includegraphics[width=0.66\hsize]{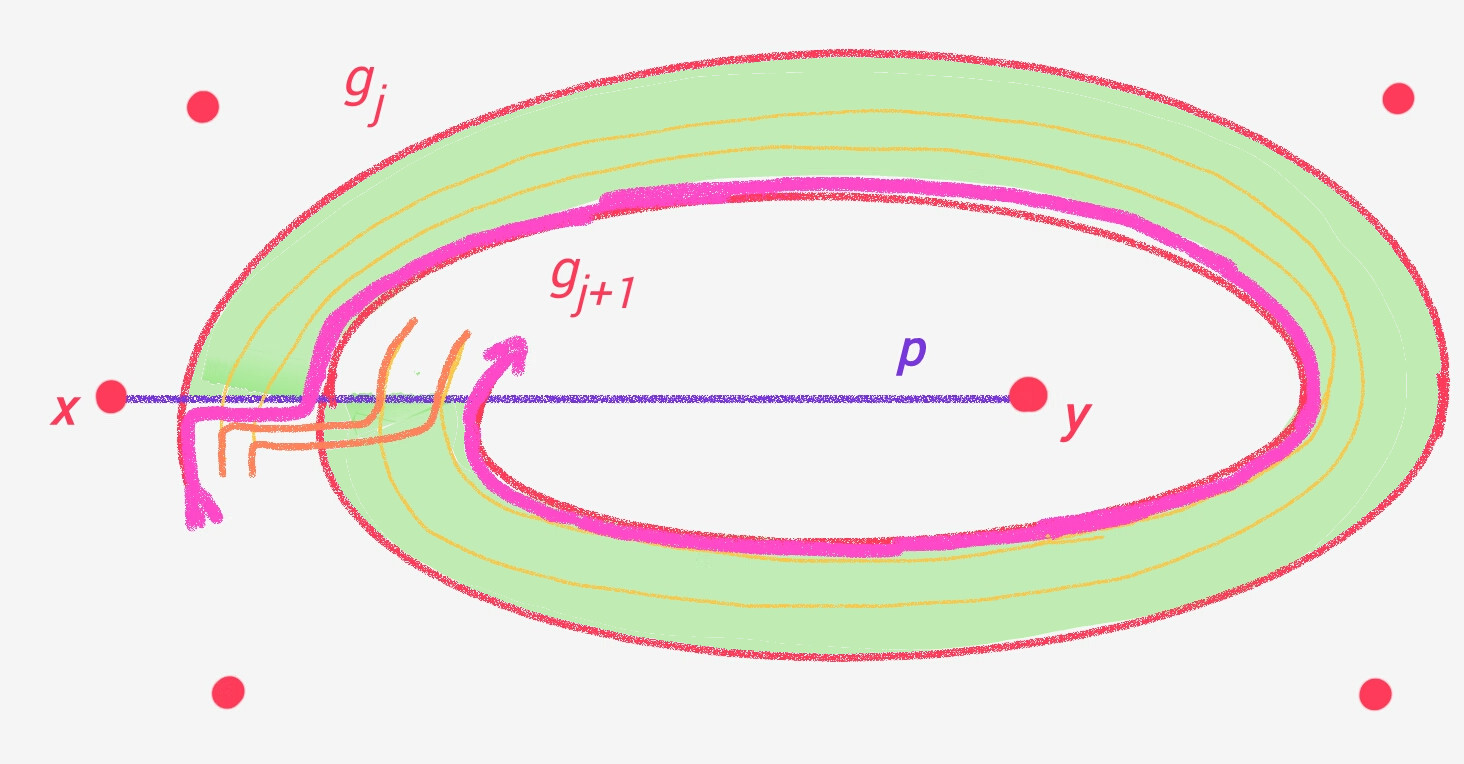}$$
\caption{The face bounded by $g_j^\circ\cup g_{j+1}^\circ$
	(\,$\subseteq f\cup p$) depicted in green:
	since there is no trinode in this face and neither $g_j,g_{j+1}$
	are crossed by other edges of $F$, it is possible to re-route
	$f$ partly along $p$ such that it avoids $g_j$.
	Other possible $F$-edges entering the green face through a section
	of $p$ must leave at the other end, and hence can be re-routed
	similarly to $f$.}
\label{fig:shorten-ex2}
\end{figure}
\end{accumulate}

\subsection{Shortest routes in a sleeve, and crossing of routes}

Now, consider step \ref{it:onealley} of our outline---%
as mentioned before, we solve this step separately for each $f=uv\in F$.
To recapitulate, for trinodes $u,v$ of a trinet $T$ of~$G$
and a given proper \mbox{$T$-sequence} $S$ from $u$ to $v$,
the task is to find a shortest route from $u$ to $v$
among those having the same $T$-sequence~$S$.
We cannot, in general, completely avoid repeating triedges in $S$ and tricells in the
sequence $(\theta_0,\theta_1,\dots,\theta_m)$ in Definition~\ref{def:properTseq}.
To prevent related technical difficulties,
we use a similar workaround as in~\cite{DBLP:journals/comgeo/HershbergerS94};
``lifting'' the respective sequence of tricells into a universal cover as
follows.

\begin{defn}[Sleeve of a $T$-sequence]\label{def:sleeve}
Let $(G',T)$ be a full trinet of a plane graph~$G$,
and consider a proper $T$-sequence $S=(p_1,p_2,\dots,p_m)$ from $u$ to $v$
determining the sequence of tricells
$(\theta_0,\theta_1,\dots,\theta_m)$ by Claim~\ref{cl:Tseq-tricells}.
For $i=0,1,\dots,m$, let $L_i$ be a disjoint copy of the embedded subgraph
of $G'\cup T$ induced by~$\theta_i$.
Construct a plane graph $L$ from the union $L_0\cup\dots\cup L_m$
by identifying, for $j=1,\dots,m$, the copy of the triedge $p_j$ in
$L_{j-1}$ with the copy of $p_j$ in $L_j$.
We call $L$ the {\em sleeve of $S$} in the trinet $(G',T)$,
and we identify $u$ and $v$ with their copies in~$L_0$ and $L_m$, respectively.
We make the unique face of $L$ that is not covered by a copy of any tricell
of~$T$ the \emph{outer face} of~$L$.
\defend\end{defn}

Observe that every route from $u$ to $v$ in $G'\cup T$
having its $T$-sequence equal to $S$ can be easily 
lifted into a corresponding $u$--$v$ route in the sleeve~$L$ of~$S$.
Conversely, any $u$--$v$ route in $L$ avoiding the outer face and
crossing the copies of triedges in $L$ at most once each, 
can be obviously projected down to $G'\cup T$ to make a route with the
$T$-sequence equal to $S$.
In fact, we prove 
that some shortest
$u$--$v$ route in $L$ must be of the latter kind, under the shortest-spanning
property (cf.~Definition~\ref{def:locally-shortest}).

\begin{lem}\apxmark\label{lem:pathinsleeve}
Let $(G',T)$ be a shortest-spanning full trinet of an edge-weighted plane graph~$G$,
$S$~a~proper $T$-sequence between trinodes $u,v$ of $T$, 
and let $L$ be the sleeve of~$S$.
Let $\ell$ be the length of a shortest route from $u$ to $v$
among those having the $T$-sequence~$S$. Then, $\ell$ is equal to the dual distance from
$u$ to $v$ in $L$ without the outer face.
Furthermore, at least one of the $u$--$v$ routes of length~$\ell$ in $L$ crosses
the copy of each triedge from $S$ in $L$ exactly once.
\end{lem}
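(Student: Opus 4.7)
The plan is to prove both claims---the length equality and the existence of a ``nice'' shortest $L$-walk---via a lift/project correspondence, with the core ingredient being a shortcut argument driven by the locally-shortest property of triedges.

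\textbf{Lifting (upper bound).} First I would take a shortest $u$--$v$ route $\pi$ in $G'\cup T$ with $T$-sequence $S$; by definition $\pi$ traverses the tricells $\theta_0,\dots,\theta_m$ in order and crosses $p_j$ from $\theta_{j-1}$ to $\theta_j$ for each $j$. Since the sleeve $L$ is obtained by gluing disjoint copies $L_0,\ldots,L_m$ of these tricells along the successive copies of $p_j$, the route $\pi$ lifts canonically to a dual $u$--$v$ walk $\tilde\pi$ in $L$ of the same length $\ell$, avoiding the outer face and crossing each triedge copy in $L$ exactly once. This yields both the inequality (dual $u$--$v$ distance in $L$ minus outer face) $\leq \ell$ and immediately exhibits a shortest walk of the form required by the ``furthermore'' part, provided the main equality holds.

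\textbf{Projection (lower bound) and a single-tricell reversal.} Conversely, let $\sigma$ be a shortest $u$--$v$ dual walk in $L$ avoiding the outer face, chosen to additionally minimize the total number of triedge-copy crossings. The central claim is that this extremal $\sigma$ crosses each triedge copy in $L$ exactly once; once known, projecting $\sigma$ down to $G'\cup T$ gives a $u$--$v$ route of the same length whose $T$-sequence is exactly $S$, finishing $\ell\leq|\sigma|$. To prove the central claim, consider the sequence of tricells that $\sigma$ visits: it is a walk on the path graph with vertices $L_0,L_1,\ldots,L_m$ (with edges between consecutive $L_{i-1},L_i$) from $L_0$ to $L_m$, whose length equals the total number of triedge crossings of $\sigma$. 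If any copy is crossed more than once, this path-walk has length $>m$, so by a standard parity/pigeonhole argument it must contain a \emph{reversal}: two consecutive crossings $x_1,x_2$ of some single triedge copy $p_i$ such that $\sigma[x_1,x_2]$ lies entirely inside one tricell copy, say $L_i$, and crosses no triedge in between.

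\textbf{Shortcut via locally-shortest (the hard part).} Let $f_1,f_2\in L_i$ denote the faces at the ends of $\sigma[x_1,x_2]$, on the $L_i$-side of the $T$-edges $e_1,e_2\subseteq p_i$. Write $W$ for the transversing weight of $p_i$ and $W_{12}$ for its partial sum between $e_1$ and $e_2$. The locally-shortest property states that $W$ equals the dual distance in $L_{i-1}\cup L_i$ between the trinode endpoints $x,y$ of $p_i$; combining this with the length-$W$ dual walk along the $L_i$-side of $p_i$ passing through $f_1,f_2$, the triangle inequality $W=d(x,y)\leq d(x,f_1)+d(f_1,f_2)+d(f_2,y)$ forces the dual distance from $f_1$ to $f_2$ inside $L_{i-1}\cup L_i$ to equal exactly $W_{12}$, and in particular $|\sigma[x_1,x_2]|\geq W_{12}$. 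Now, because the $T$-edges $e_1,e_2$ have weight zero, the $L_{i-1}$-side faces $f_1',f_2'$ opposite to $f_1,f_2$ are dually reachable at cost $0$, and the $L_{i-1}$-side walk of $p_i$ from $f_1'$ to $f_2'$ has length exactly $W_{12}$ (the same internal-vertex weights contribute on either side of $p_i$). Replacing the subwalk $f_1'\to f_1\to\sigma[x_1,x_2]\to f_2\to f_2'$ by this $L_{i-1}$-side walk therefore produces a dual $u$--$v$ walk in $L$ of length no larger than $|\sigma|$ and with two fewer crossings of $p_i$, contradicting the extremal choice of $\sigma$.

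The \emph{main difficulty} is precisely to locate the reversal so that $\sigma[x_1,x_2]$ genuinely sits in a single tricell copy (where locally-shortest applies directly to $L_{i-1}\cup L_i$), and to verify that the shortcut truly does not increase total length even though the $T$-edge crossings at $x_1,x_2$ contribute zero weight; both are resolved by the path-walk/parity observation and the symmetry of the $G'$-edge weights on the two sides of $p_i$.
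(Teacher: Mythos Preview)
Your proposal is correct and follows essentially the same approach as the paper: lift a shortest $T$-sequence-$S$ route to the sleeve for the upper bound, take a shortest sleeve walk for the lower bound, and use the locally-shortest property of the triedges to push an offending double-crossing subwalk to the other side of $p_i$ without increasing length. The only cosmetic difference is that the paper locates the single-tricell subwalk by taking the \emph{maximum} index $i$ with $p_i'$ crossed more than once (which forces the subwalk between two consecutive $p_i'$-crossings to lie in $L_i$), whereas you use the path-walk reversal/parity argument; both are equivalent ways of finding the same local configuration, and your triangle-inequality justification of the shortcut is just a more explicit version of the paper's one-line appeal to locally-shortest.
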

\begin{accumulate}
\ifaccumulating{\subsubsection*{Proof of Lemma~\ref{lem:pathinsleeve}}}
\begin{proof}
Let $\pi$ be any shortest route from $u$ to $v$ in $G'\cup T$
with the given $T$-sequence~$S$.
The copies of the faces and dual edges of $\pi$ lifted into the sleeve $L$
give a route $\pi_L$ from $u$ to $v$ which avoids the outer face of~$L$.
Obviously, the length of $\pi_L$ equals the length of~$\pi$.

Conversely, we aim to show that some shortest $u$--$v$ route crosses
the copy of each triedge of $S$ in $L$ exactly once.
Assume a shortest route $\pi_1$ of length $\ell$ 
from $u$ to $v$ in $L$ without the outer face.
Recall from Definition~\ref{def:sleeve} that $L=L_0\cup\dots\cup L_m$.
For $S=(p_1,p_2,\dots,p_m)$, let $(p_1',p_2',\dots,p_m')$ be the sequence of
corresponding copies of the triedges of $S$ in~$L$,
and let $p_0':=u$,~$p_{m+1}':=v$.
Note that each $p_i'$, $i\in\{1,\dots,m\}$, connects two vertices of the
outer face of~$L$, and so $p_i'$ separates $p_{i-1}'$ from $p_{i+1}'$.
In particular, every $u$--$v$ route in $L$ which avoids the outer face
must cross each of $p_1',\dots,p_m'$.

Let $i$ be the maximum index such that $p_i'$ is crossed by $\pi_1$ more
than once.
Then there is a subpath $\sigma_1\subseteq\pi_1$
stretching between two consecutive crossings of $\pi_1$ with $p_i'$
and contained in~$L_i$.
We turn $\pi_1$ into $\pi_2$ by re-routing the subpath $\sigma_1$
along the boundary $p_i'$ in~$L_{i-1}$.
Since $p_i$ is locally-shortest in the trinet $T$,
the length of $\pi_2$ equals the length of $\pi_1$.
By induction on the number of excess crossings of $\pi_2$ with copies of
the triedges, we can then get a $u$--$v$ route $\pi_0$ 
of length~$\ell$ such that $\pi_0$ crosses
the copy of each triedge from $S$ in $L$ exactly once.

Finally, the route $\pi_0$ projects down to a route
of length $\ell$ from $u$ to $v$ in $G'\cup T$ having the $T$-sequence~$S$.
\end{proof}
\end{accumulate}

Regarding the shortest path computation,
we note that if the edge-weights are given in unary or are bounded by a constant, 
a simple adaption of BFS achieves the job. 
Otherwise, we can use the algorithm of Klein et al.~\cite{KleinEA} 
since $L$ is planar, or Thorup's algorithms~\cite{Thorup} since we have integral weights.
Altogether we obtain: 

\begin{cor}\apxmark\label{cor:pathinsleeve}
Let $(G',T)$ be a shortest-spanning full trinet of an edge-weighted plane graph~$G$,
and $S$~a~proper $T$-sequence between trinodes $u,v$ of $T$.
A shortest $u$--$v$ route among those having the $T$-sequence~$S$
can be found in the geometric dual graph of the sleeve $L$ of $S$ 
in $\OO(|S|\cdot|N(T)|\cdot|V(G)|)$ time, using a linear time shortest path algorithm.
\end{cor}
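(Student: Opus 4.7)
The plan is to leverage Lemma~\ref{lem:pathinsleeve} directly: that lemma reduces the search for a shortest $u$--$v$ route with the prescribed $T$-sequence $S$ in $G'\cup T$ to a shortest $u$--$v$ path computation in the geometric dual of the sleeve~$L$, provided the outer face of $L$ is forbidden. The algorithm therefore has three steps: build $L$ and its dual, invoke a linear-time planar shortest-path routine, and project the resulting dual walk back to a route in $G'\cup T$.

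First I would construct $L$ following Definition~\ref{def:sleeve}: allocate $m+1=|S|+1$ disjoint copies of the embedded subgraphs of $G'\cup T$ on tricells $\theta_0,\dots,\theta_m$, and for each $j$ identify the copies of the triedge $p_j$ in $L_{j-1}$ and $L_j$. To bound the size of $L$, observe that each tricell is bounded by three triedges and holds a portion of $G'$. A single triedge is a transversing path in $G'$ and so may carry up to $\OO(|V(G)|)$ subdivision vertices; there are at most $\OO(|N(T)|)$ triedges in $T$; and the interior $G'$-content summed over all tricells is $\OO(|V(G)|)$. A crude worst-case bound for a single tricell is thus $\OO(|V(G)|\cdot|N(T)|)$ when one or more long triedges occur on its boundary, yielding $|V(L)|+|E(L)|=\OO(|S|\cdot|N(T)|\cdot|V(G)|)$.

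Second, I would form the geometric dual of $L$ with dual edge lengths inherited from the weights $w'$ induced by $w$, discard the dual vertex of the outer face, and add two auxiliary source/sink vertices $u^*,v^*$ joined by zero-length dual edges to, respectively, the dual vertices of the faces incident to $u$ in $L_0$ and to $v$ in $L_m$. Running a single-source shortest-path algorithm from $u^*$ then returns the desired shortest dual walk to $v^*$. For unit or constant-bounded weights a plain BFS suffices; for general non-negative integer weights the algorithm of Thorup~\cite{Thorup} is linear time; for arbitrary non-negative weights on a planar graph the algorithm of Klein et al.~\cite{KleinEA} is linear time. In all three cases the running time is $\OO(|V(L)|+|E(L)|)=\OO(|S|\cdot|N(T)|\cdot|V(G)|)$.

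Finally I would project the computed dual walk in $L$ down to $G'\cup T$ by mapping each face/edge copy back to its original; by Lemma~\ref{lem:pathinsleeve} the resulting route has $T$-sequence~$S$ and attains the minimum length among such routes. No step is a real obstacle: the content of the corollary is essentially Lemma~\ref{lem:pathinsleeve} bundled with a size estimate on $L$ and an off-the-shelf planar shortest-path subroutine. The only point that needs stating carefully is the tricell size bound, which rests on the total triedge count $\OO(|N(T)|)$ and the worst-case length $\OO(|V(G)|)$ of an individual triedge, so that one can honestly charge the claimed $\OO(|S|\cdot|N(T)|\cdot|V(G)|)$ factor to the sleeve's construction and traversal.
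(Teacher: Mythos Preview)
Your proposal is correct and follows essentially the same route as the paper: bound the size of the sleeve $L$ by $\OO(|S|\cdot|N(T)|\cdot|V(G)|)$, invoke Lemma~\ref{lem:pathinsleeve} to reduce to a shortest-path computation in the dual of $L$ (minus the outer face), and apply a linear-time shortest-path algorithm such as~\cite{KleinEA} or~\cite{Thorup}. The paper's own proof is terser---it simply observes that the whole trinet $G'\cup T$ has size $\OO(|N(T)|\cdot|V(G)|)$ and that the sleeve is built from $|S|+1$ subgraphs of it---whereas you derive the same bound by reasoning about individual triedges and tricells; your per-tricell accounting is slightly loose (a tricell has only three bounding triedges, so the $|N(T)|$ factor really comes from the subdivided $G'$-edges, not the boundary), but the stated bound is valid and the conclusion is the same.
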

\noindent
Observe that in our case the term $|S|\cdot|N(T)|$ is bounded by a function of~$k=|F|$.
\begin{accumulate}
\ifaccumulating{\subsubsection*{Proof of Corollary~\ref{cor:pathinsleeve}}}
\begin{proof}
By Definition~\ref{def:trinet},
the size of the full trinet $(G',T)$ is $\OO(|N(T)|\cdot|V(G)|)$,
and the sleeve is composed of $|S|$ copies of subgraphs of $G'\cup T$,
and so $\OO(|S|\cdot|N(T)|\cdot|V(G)|)$ bounds the size of the sleeve~$L$.
Therefore, any linear time shortest path algorithm applicable in our
situation, e.g. the aforementioned Klein et al.~\cite{KleinEA} or 
Thorup~\cite{Thorup}, does the job.
\end{proof}
\end{accumulate}

\smallskip
Finally, it remains to address step \ref{it:whichcross}.
Consider a $4$-tuple of distinct trinodes $u,v,u',v'$.
%
Let $\pi$ be a $u$--$v$ route and $\pi'$ be a $u'$--$v'$ route.
We say that an {\em arc $b$ follows the route $\pi$}
if $b$ is contained in the alley of $\pi$ and $b$ intersects the faces
forming the alley exactly in the order given by $\pi$
(recall that a route is technically a dual walk and hence, possibly, 
some face might repeat in~$\pi$).
We say that the pair of routes $\pi,\pi'$ is {\em non-crossing},
if there exist a $u$--$v$ arc $b$ following $\pi$
and a $u'$--$v'$ arc $b'$ following $\pi'$
such that $b\cap b'=\emptyset$.
In order to characterize possible non-crossing pairs of routes
in terms of their $T$-sequences, we bring the following definition:

\begin{defn}[Crossing certificate]\label{def:crosscert}
Let $(G',T)$ be a full trinet of a plane graph~$G$,
and let $\pi$ be a route from $u$ to $v$ and $\pi'$ be a route
from $u'$ to $v'$ in $G'\cup T$,
where $u,v,u',v'$ are distinct trinodes of~$T$.
Assume the $T$-sequences $S=(p_1,\dots,p_n)$ of $\pi$ and
$S'=(p_1',\dots,p_\ell')$ of $\pi'$ are proper and let
$(\theta_0,\dots,\theta_n)$ and
$(\theta_0',\dots,\theta_\ell')$ be their tricell 
sequences by Claim~\ref{cl:Tseq-tricells}.
For technical reasons, let $p_0:=u$, $p_{n+1}:=v$
and $p_0':=u'$, $p_{\ell+1}':=v'$.

A {\em crossing certificate for $S,S'$} is a triple of indices
$(c,d,m)$ where $c,d,m\geq0$, $c+m\leq n$, $d+m\leq\ell$,
such that the following holds:
\begin{enumerate}[a)]\parskip0pt
\item $\theta_{c+j}=\theta'_{d+j}$ for $0\leq j\leq m$,
but $p_c\not=p'_d$ and $p_{c+m+1}\not=p'_{d+m+1}$,
\item the triple $p_c,p_d',p_{c+1}$ occurs around the tricell
$\theta_c=\theta'_d$ in the same cyclic orientation as
the triple $p_{c+m+1},p'_{d+m+1},p_{c+m}$ occurs around
$\theta_{c+m}=\theta'_{d+m}$.
\end{enumerate}
Furthermore, a crossing certificate for the same sequence $S$ and the reversal of
$S'$ from $v'$ to $u'$ is also called a crossing certificate for $S,S'$.
\defend\end{defn}

Definition~\ref{def:crosscert} deserves a closer explanation.
Assume that a crossing certificate satisfies $0<c<n$ and $0<d<\ell$.
Then all four elements $p_c,p'_d,p_{c+1},p'_{d+1}$ are triedges of the same
tricell $\theta_c=\theta'_d$, and since
$p_{c+1}\not=p_c\not=p'_d\not=p'_{d+1}$, we get $p_{c+1}=p'_{d+1}$.
Hence $m>0$ and the situation is such that $S$ and $S'$ ``merge'' at
$\theta_c$ where (up to symmetry) $S$ comes on the left of $S'$,
and they again ``split'' at $\theta_{c+m}$ where $S$ leaves on the right
of~$S'$, thereby ``crossing it''.
The full definition, though, covers also the boundary cases of crossing
certificates for which $c\in\{0,n\}$ or $d\in\{0,\ell\}$ (or both),
and when $S$ and $S'$ may have no triedge in common;
those can be easily examined case by case.


\begin{lem}\apxmark\label{lem:noncrossing}
Let $(G',T)$ be a full trinet of an edge-weighted plane graph~$G$,
and $u_i,v_i$, $i=1,2$, be four distinct trinodes.
Assume that $S_i$ from $u_i$ to $v_i$ are proper $T$-sequences.
In $G'\cup T$, for $i=1,2$, there exist routes $\pi_i$ from $u_i$ to $v_i$
having the the $T$-sequence~$S_i$, 
such that {\em$\pi_1,\pi_2$ are non-crossing},
if and only if there exists no crossing certificate for~$S_1,S_2$.
\end{lem}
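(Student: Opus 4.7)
For the forward direction of the equivalence, fix a crossing certificate $(c, d, m)$ and consider arbitrary arcs $b_1, b_2$ following $\pi_1, \pi_2$. The union $\Theta := \theta_c \cup \theta_{c+1} \cup \cdots \cup \theta_{c+m}$ (with $\theta_{c+j} = \theta'_{d+j}$) is a closed topological disc in the plane, since consecutive shared tricells are glued along the shared triedges $p_{c+1} = p'_{d+1}, \dots, p_{c+m} = p'_{d+m}$. The arcs $b_1, b_2$ enter and exit $\Theta$ through four distinct points lying on the triedges $p_c, p_{c+m+1}$ and $p'_d, p'_{d+m+1}$ respectively (or at the trinodes themselves in boundary cases). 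Condition (b) of Definition~\ref{def:crosscert} translates, via the cyclic order of triedges around $\theta_c$ and $\theta_{c+m}$, into the statement that these four points appear in interleaved cyclic order on $\partial\Theta$. The Jordan curve theorem applied inside $\Theta$ then forces $b_1 \cap b_2 \neq \emptyset$.

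For the reverse direction, I would construct non-crossing arcs explicitly. Decompose the common portion of the tricell sequences of $S_1$ and $S_2$ into maximal shared sub-runs. On each sub-run, the absence of a crossing certificate means (by the contrapositive of the argument above) that the four entry/exit points appear non-interleaved around the corresponding boundary disc; hence arcs can be drawn locally through the sub-run so that they remain on consistent opposite sides of each other. In the complementary portions of the alleys, the two arcs live in tricell-disjoint regions and can be completed independently without interference.

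The main obstacle will be gluing the local non-crossing routings into globally consistent arcs $b_1, b_2$ across multiple shared sub-runs. One must show that the `side' of $b_1$ relative to $b_2$ chosen within each shared sub-run is compatible with the choices made in the previous and subsequent sub-runs; if it flipped, the arcs would either self-intersect or require an additional crossing at the flip, and from such a flip one could extract a new crossing certificate by examining the cyclic orientations at its two endpoints, contradicting our assumption. Boundary cases where a shared sub-run abuts an endpoint $u_i$ or $v_i$, or where $S_1, S_2$ share no triedges but meet only at a single common tricell containing both endpoints, correspond precisely to the boundary indices $c \in \{0, n\}$ or $d \in \{0, \ell\}$ in Definition~\ref{def:crosscert}, and are handled by directly inspecting the cyclic incidences around the relevant tricell.
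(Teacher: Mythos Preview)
Your forward direction has a genuine gap. You form $\Theta:=\theta_c\cup\dots\cup\theta_{c+m}$ as a union \emph{in the plane} and assert it is a closed disc. But proper $T$-sequences may repeat triedges (this is the whole point of the $8k^4$ bound in Lemma~\ref{lem:shortTseq}), and hence the central tricells of a crossing certificate may repeat as well. When that happens, the planar union collapses repeated tricells together; it need not be a disc, its boundary need not be a single cycle carrying $p_c,p'_d,p_{c+m+1},p'_{d+m+1}$ in a well-defined cyclic order, and the relevant sub-arc of $b_i$ may weave through $\Theta$ several times rather than once from one boundary point to another. The Jordan argument you want therefore does not apply as stated. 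The paper avoids exactly this issue by passing to the sleeve (Definition~\ref{def:sleeve}): it takes the subgraphs $K_i\subseteq L_i$ formed by the \emph{copies} of the central tricells in the sleeve $L_i$, identifies $K_1$ with $K_2$, and observes that in this lifted disc the four elements $p_c,p'_d,p_{c+m+1},p'_{d+m+1}$ do sit on the outer face in interleaved cyclic order, so Jordan applies there. Your argument becomes correct once you make this lift.

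For the reverse direction your plan is workable but the paper takes a markedly simpler route that sidesteps your ``main obstacle'' entirely. Rather than decomposing into maximal shared sub-runs and then arguing side-consistency across the gluing, the paper fixes, once and for all and for every tricell $\theta$, six \emph{representative dual paths}: three internally disjoint paths $\sigma_{1,2},\sigma_{1,3},\sigma_{2,3}$ between the three bounding triedges, and three paths $\rho_1,\rho_2,\rho_3$ from each trinode to the opposite triedge. Each route $\pi_i$ is then simply the concatenation of the appropriate representatives, tricell by tricell. Because both routes are forced to use the \emph{same} representative whenever they traverse the same tricell through the same pair of triedges, the compatibility you worry about is automatic, and the verification that the resulting $\pi_1,\pi_2$ are non-crossing iff no crossing certificate exists becomes a local check per tricell. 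Your sub-run approach would eventually arrive at the same conclusion, but with more bookkeeping.
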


\begin{accumulate}
\ifaccumulating{\subsubsection*{Proof of Lemma~\ref{lem:noncrossing}}}
Suppose we have two proper $T$-sequences $S,S'$ as in Definition~\ref{def:crosscert}.
Referring to Definition~\ref{def:crosscert}, point~a),
we call the tricells $\theta_{c+j}=\theta'_{d+j}$ for $0\leq j\leq m$
the {\em central tricells of the crossing certificate} $(c,d,m)$.

\begin{proof}
Let $L_i$, $i=1,2$, be the sleeves of $S_i$ in $(G',T)$.
Assume that $(c,d,m)$ is a crossing certificate for $S_1,S_2$,
and let $R=(\theta_{c},\dots,\theta_{c+m})$ be the sequence of the central
tricells of this certificate.
Let $K_i\subseteq L_i$, $i=1,2$, be the plane subgraphs consisting of the
copies of the tricells from $R$ in the sleeve $L_i$.
Note that $R$ may repeat the same tricell several times, but in $L_i$
we have got independent copies of the possibly repeated tricells.
We may also assume that $K_1=K_2$ since they are both made of copies of the
same sequence $R$ of tricells.

In the above view, Definition~\ref{def:crosscert} says that $(c,d,m)$ is a
crossing certificate iff the elements $p_c,p_d',p_{c+m+1},p'_{d+m+1}$
appear on the outer face of $K_1=K_2$ in this cyclic order.
Hence, by Jordan's curve theorem,
if there is a crossing certificate for~$S_1,S_2$, then $\pi_1,\pi_2$
cannot be non-crossing.

\smallskip
Conversely, we show how to build non-crossing $\pi_1,\pi_2$ if there is no
crossing certificate for $S_1,S_2$.
For each tricell $\theta$ of $T$, bounded by triedges $q_1,q_2,q_3$,
we choose arbitrary three edges $e_j$ from $q_j$, $j=1,2,3$,
and arbitrary three internally disjoint dual paths
$\sigma_{1,2},\sigma_{1,3},\sigma_{2,3}$ contained in $\theta$
such that $\sigma_{i,j}$ is a dual path in $G'\cup T$ 
connecting the face incident with $e_i$ to the face incident with~$e_j$.
Furthermore, we denote by $x_j$ the trinode of $\theta$ opposite to $q_j$
and we choose another three arbitrary dual paths $\rho_1,\rho_2,\rho_3$
contained in $\theta$ such that $\rho_j$ is a dual path in $G'\cup T$ 
connecting a face incident with $x_j$ to the face incident with~$e_j$.
We call chosen $\sigma_{1,2},\sigma_{1,3},\sigma_{2,3},\rho_1,\rho_2,\rho_3$
the representative dual paths of the tricell~$\theta$.

For the proper $T$-sequence $S_i$, $i=1,2$,
we simply compose the route $\pi_i$ of the apropriate representative dual paths 
of the tricells determined by $S_i$.
It is routine to verify that these $\pi_1,\pi_2$ are non-crossing,
if and only if there exists no crossing certificate for~$S_1,S_2$.
\end{proof}
\end{accumulate}

\subsection{Summary of the algorithm}

We are now ready to put all of the above results together, in order to summarize the overall
algorithm to solve r-MEI, see Algorithm~\ref{alg:rmei}.
Based thereon, together with Lemmas~\ref{lem:shortTseq},\,\ref{lem:pathinsleeve},
Corollary~\ref{cor:pathinsleeve}, and Lemma~\ref{lem:noncrossing} we obtain: 
\begin{thm}\apxmark\label{thm:rigidalg}
Let $G$ be a connected plane graph with edge weights $w\colon E(G)\to\mathbb{N}_+\cup\{\infty\}$,
and $F$ a set of new edges (vertex pairs, in fact) such that $w(f)=1$ for all~$f\in F$.
Algorithm~\ref{alg:rmei} finds an optimal solution to
$w$-weighted $\rMEI (G,F)$, if a finite solution
exists, in time $\OO\big(2^{\poly(|F|)}\cdot|V(G)|\big)$.
\end{thm}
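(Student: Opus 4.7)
The plan is to compose the ingredients already established---Lemmas~\ref{lem:shortTseq}, \ref{lem:pathinsleeve} and \ref{lem:noncrossing} together with Corollary~\ref{cor:pathinsleeve}---into an enumeration over proper $T$-sequences for the edges of $F$. First I would construct a shortest-spanning full trinet $(G',T)$ of $G$ with $N(T)=V(F)$: the globally-shortest spanning triedges can be laid along a set of dual shortest paths among the at most $2k$ trinodes, computable in $\OO(|V(G)|)$ time by the linear-time planar shortest-path algorithms already cited; the triangulation on $N(T)$ is then completed by adding locally-shortest triedges. The resulting trinet has $\OO(k)$ triedges and tricells, and the full trinet has size $\OO(k\cdot|V(G)|)$.

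Next, Lemma~\ref{lem:shortTseq} guarantees the existence of an optimal drawing in which each $f\in F$ follows the alley of a route whose $T$-sequence is \emph{proper} and contains each triedge at most $8k^4$ times, hence has total length $\OO(k^5)$. Since the intermediate tricells are forced by Claim~\ref{cl:Tseq-tricells}, each such sequence is a word in an alphabet of size $\OO(k)$, giving at most $2^{\poly(k)}$ candidate sequences $S_f$ per edge and at most $2^{\poly(k)}$ joint assignments $(S_f)_{f\in F}$. Algorithm~\ref{alg:rmei} iterates through these assignments; for each one it (i)~tests by Definition~\ref{def:crosscert} whether a crossing certificate exists for every pair $(S_f,S_{f'})$, building a set $X\subseteq\binom{F}{2}$ of forced crossings in $\poly(k)$ time; (ii)~for every $f=uv\in F$ computes, via Corollary~\ref{cor:pathinsleeve} applied to the sleeve of $S_f$, a shortest $u$--$v$ route $\pi_f$ of $T$-sequence $S_f$ in $\OO(\poly(k)\cdot|V(G)|)$ time; (iii)~records the total weighted cost $\sum_{f\in F}\mathrm{len}(\pi_f)+|X|$. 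The algorithm returns the cheapest assignment, materialised by drawing each $f$ as an arc in the alley of $\pi_f$ and resolving each certified pair with exactly one crossing, which is permissible under Lemma~\ref{lem:c+kch2} since all edges of $F$ have weight~$1$.

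Correctness hinges on three observations. Lemma~\ref{lem:shortTseq} implies that the $T$-sequences of some optimal drawing are among those we enumerate, so the minimum recorded cost is at most $\rINS(G,F)$. Lemma~\ref{lem:pathinsleeve} ensures that for every enumerated $S_f$ the route $\pi_f$ supplied by Corollary~\ref{cor:pathinsleeve} has length equal to the shortest alley of $T$-sequence $S_f$ \emph{and} crosses each copy of a triedge only once, so each candidate truly admits a valid drawing of the recorded cost. Lemma~\ref{lem:noncrossing} then certifies that $|X|$ counts exactly the pairs of $F$-edges forced to cross in this homotopy class, while pairs outside $X$ can be drawn non-crossing within their alleys. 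The runtime is the product of $2^{\poly(k)}$ candidate assignments with $\OO(\poly(k)\cdot|V(G)|)$ work per candidate, yielding the claimed $\OO\bigl(2^{\poly(|F|)}\cdot|V(G)|\bigr)$ bound; finiteness of the optimum is detected exactly when every shortest-path computation in step~(ii) returns a finite value. The main obstacle is to keep the shortest-spanning trinet construction compatible with the enumeration bound of Lemma~\ref{lem:shortTseq}---in particular, making sure that the globally-shortest spanning subset $J$ used at trinet construction time is the very subset that underpins the $8k^4$ repetition bound---and to confirm that the sleeve shortest-path step indeed runs in linear time in $|V(G)|$ with only a polynomial blow-up in $k$.
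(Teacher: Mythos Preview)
Your outline follows the paper's approach closely, but there is a genuine gap in the correctness argument concerning the realisation of the selected assignment of $T$-sequences.

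You record the cost of an assignment $(S_f)_{f\in F}$ as $\sum_f\mathrm{len}(\pi_f)+|X|$, where $X$ is the set of pairs admitting \emph{some} crossing certificate, and you then claim the winning assignment can be ``materialised \dots\ resolving each certified pair with exactly one crossing, which is permissible under Lemma~\ref{lem:c+kch2}''. This inference is not valid. Lemma~\ref{lem:c+kch2} only says that in an \emph{optimal} drawing two $F$-edges cross at most once; it says nothing about an arbitrary pair of proper $T$-sequences. For a generic assignment there may exist \emph{two independent} crossing certificates for some pair $S_f,S_{f'}$, and then every pair of arcs following $\pi_f,\pi_{f'}$ is forced to cross at least twice. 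In that case your recorded cost strictly undercounts the true cost of any drawing compatible with this assignment. Since such an undercounted assignment can beat the (correctly counted) optimal assignment in your minimisation, the algorithm may select it and then be unable to produce a drawing with the claimed number of crossings. The paper handles this explicitly: step~(\ref{it:scriP})(\ref{it:crossingtwice}) of Algorithm~\ref{alg:rmei} tests for two independent crossing certificates (Claim~\ref{clm:indepcertif}) and sets $\crg_{\scri P}:=\infty$ for such assignments, so they are never selected.

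Relatedly, even after excluding those assignments you still need to argue that the selected routes can actually be drawn with \emph{exactly} $|X|$ mutual crossings and no more. Lemma~\ref{lem:noncrossing} gives you ``zero crossings iff no certificate''; it does not by itself give ``one crossing iff at least one but not two independent certificates''. The paper supplies this via Claim~\ref{clm:indepcertif} and the postprocessing argument in step~(\ref{it:realize})(\ref{it:postp}), which re-routes along shared sections to eliminate excess crossings while preserving the $T$-sequences. Your proposal should incorporate both the double-certificate filter and the realisation/postprocessing justification.
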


\begin{pseudoalg}[h]
{\centering\begin{minipage}{0.99\hsize}\small
\begin{itemize}\item[]
\begin{enumerate}[(1)]
\item[\bf Input:] a plane graph $G$, edge weights
 $w\colon E(G)\to\mathbb{N}_+\cup\{\infty\}$,
new edge set $F$ s.t.~$w(f)=1$ for~$f\in F$.
\item[\bf Output:] an optimal solution to ($w$-weighted) $\rMEI(G,F)$.
\item Compute a full trinet $(G',T)$, $N(T):=V(F)$,
	with the shortest-spanning property of~$T$.
\label{it:ctrinet}
\begin{enumerate}[(a)]
\item Pick any trinode $x\in N(T)$ and greedily compute globally-shortest
	triedges (Def.~\ref{def:locally-shortest}) from $x$ to all other
	trinodes, using a simple shortest path computation.
\item The remaining triedges can be  greedily computed as locally-shortest,
	 one after another.
\end{enumerate}
\item\label{it:scriS} For each $f=uv\in F$:
\begin{enumerate}[(a)]
\item Compute $\scri S_f$ as the set of all of its possible proper $T$-sequences
	from $u$ to~$v$.
	The size of $\scri S_f$ is bounded due to Lemma~\ref{lem:shortTseq}(b)
	as $\OO(2^{\poly(|F|)})$.
\item For each $S\in \scri S_f$, compute a shortest $u$--$v$ route $\pi_S$
	in $G'\cup T$ among those having the $T$-sequence $S$
	(where the length function is induced by~$w$), using Corollary~\ref{cor:pathinsleeve}.
\end{enumerate}
\item For each possible set $\scri P = \{S_f\}_{f\in F}$ with $S_f\in\scri S_f$:
\label{it:scriP}
\begin{enumerate}[(a)]
\item\label{it:crossingcert}
	Check, for each pair $f,f'\in F$, whether there
	exists a crossing certificate for $S_f,S_{f'}$
	(\mbox{e.g., using} brute force by Def.~\ref{def:crosscert}).
	\\Let $X_{\scri P}$ be the set of pairs $\{f,f'\}$ for which such a certificate
	has been found.
\item\label{it:crossingtwice}
	If any pair $\{f,f'\}\in X_{\scri P}$ requires more than
	a single crossing (which can be found by checking again for
	two ``independent'' \ifaccumulating{(~~\apxmark) }%
	crossing certificates of $S_f,S_{f'}$),
	let $\crg_{\scri P}:=\infty$.
\item\label{it:crossingsum}
	Otherwise, let
	 $\crg_{\scri P}:= |X_{\scri P}|
		 + \sum_{f\in F}\, len_{w}(\pi_{S_f})$,
	where $len_{w}$ denotes the length function in the geometric dual of
	$G'\cup T$ induced by~$w$.
\end{enumerate}
\item\label{it:choosemin} Among all $\scri P$ considered in~\ref{it:scriP},
	pick the one with smallest $\crg_{\scri P}<\infty$.
	Let this be $\scri P^\circ=\{S^\circ_f\}_{f\in F}$.
\item\label{it:realize}
	In the plane graph $G$, realize each edge $f\in F$ following its respective
	route $\pi_{S^\circ_f}$, such that the overall resulting weighted number 
	of crossings is $\crg_{\scri P^\circ}$.
\begin{enumerate}[(a)]
\item\label{it:postp}
	(By minimality, no $\pi_{S^\circ_f}$ will be self-intersecting.)
	Using well-known postprocessing---removing consecutive
	crossings between $f,f'$ by re-routing $f'$ partially along $f$ or vice
	versa---allows to avoid multiple crossings in pairs from $X_{\scri P}$
	and to make remaining pairs from $F$ crossing-free.
\end{enumerate}
\end{enumerate}
\end{itemize}
\end{minipage}\par\medskip}
\caption{Algorithm to solve the (weighted) rigid MEI problem.}
\label{alg:rmei}
\end{pseudoalg}

\begin{accumulate}
\ifaccumulating{\subsubsection*{Proof of Theorem~\ref{thm:rigidalg}}}
\noindent
Before giving the proof,
we need a deeper understanding of the concept of non-crossing routes and
crossing certificates,
and a detailed specification of the step \ref{it:scriP}\ref{it:crossingtwice}
of Algorithm~\ref{alg:rmei}.

By adapting the arguments of Lemma~\ref{lem:noncrossing}, one can actually
get the following slight strengthening:
\begin{claim}\label{clm:crossingr}
Let $(G',T)$ be a full trinet of an edge-weighted plane graph~$G$,
and $u_i,v_i$, $i=1,2$, be four distinct trinodes.
Let $\pi_i$, $i=1,2$, be a $u_i$--$v_i$ route in $G'\cup T$.
If there exist simple $u_i$--$v_i$ arcs $b_i$, $i=1,2$, 
following the route $\pi_i$, such that
$b_1$ intersects $b_2$ in exactly one point $x$ and they properly cross in~$x$,
then there exists a crossing certificate for the $T$-sequences of $\pi_1$
and~$\pi_2$.
\qed\end{claim}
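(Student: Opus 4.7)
}
The plan is to locate a maximal tricell overlap of the two routes at the crossing point $x$, and then to read off the indices $(c,d,m)$ of a crossing certificate directly from this overlap. Since the definition allows the second $T$-sequence to be reversed, I may choose the direction of $\pi_2$ freely.

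First, I will locate the overlap. Because $b_1,b_2$ follow the routes $\pi_i$, each arc $b_i$ lies in the union of the (closed) tricells whose indices appear in the tricell sequence $(\theta_0^i,\dots)$ of $\pi_i$ given by Claim~\ref{cl:Tseq-tricells}. A single proper crossing point $x\in b_1\cap b_2$ therefore lies in some tricell $\theta^\star$ that appears as $\theta_{c_0}^1=\theta_{d_0}^2=\theta^\star$ for suitable indices $c_0,d_0$. (If $x$ happens to sit on a triedge shared by the two arcs, I use the convention of perturbing $x$ slightly into one of the two incident tricells; this is justified because $b_1,b_2$ properly cross so they immediately enter a common tricell on each side of $x$.) Now I extend the pair of indices as long as possible: walk backwards as long as $\theta_{c_0-j}^1=\theta_{d_0-j}^2$ and forwards as long as $\theta_{c_0+j}^1=\theta_{d_0+j}^2$, yielding a maximal run of matching tricells of length $m+1$, with starting indices $(c,d)$ and ending indices $(c+m,d+m)$.

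Next, I will verify condition~(a) of Definition~\ref{def:crosscert}. The tricell equalities $\theta_{c+j}=\theta'_{d+j}$ hold by construction. For the triedge inequalities $p_c\neq p'_d$ and $p_{c+m+1}\neq p'_{d+m+1}$, I use maximality. If for instance $p_c=p'_d$, then the tricells on the opposite sides of that shared triedge would also agree: $\theta_{c-1}^1=\theta_{d-1}^2$, contradicting maximality. The boundary cases where $c=0$ or $c+m=n$ (symmetrically for $d$) are handled by recalling the convention $p_0=u_1$, $p_{n+1}=v_1$, $p'_0=u_2$, $p'_{\ell+1}=v_2$, together with the hypothesis that $u_1,v_1,u_2,v_2$ are four distinct trinodes.

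Finally, I will prove condition~(b), the cyclic-orientation matching, which is where the single proper crossing enters. Restrict the arcs $b_1,b_2$ to the closed region $\Theta$ formed by the (possibly repeated, but here lifted into independent copies as in the sleeve construction of Definition~\ref{def:sleeve}) shared tricells $\theta_c,\dots,\theta_{c+m}$; the pieces of $b_1,b_2$ inside $\Theta$ enter through triedges $p_c,p'_d$ and leave through triedges $p_{c+m+1},p'_{d+m+1}$ (or through trinode endpoints in the boundary cases). Since the only intersection between $b_1$ and $b_2$ globally is the proper crossing at $x$, and $x\in\Theta$, the restrictions of $b_1,b_2$ to $\Theta$ are two simple arcs that properly cross exactly once. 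By a direct application of the Jordan curve theorem on the disc-like region $\Theta$ (with the four ``ports'' on its outer boundary), two such arcs cross an odd number of times if and only if their four endpoints alternate along the boundary of $\Theta$. Unpacking ``alternate'' in terms of triedges incident to $\theta_c=\theta'_d$ and $\theta_{c+m}=\theta'_{d+m}$ yields precisely the coincidence of cyclic orientations of the triples $(p_c,p'_d,p_{c+1})$ and $(p_{c+m+1},p'_{d+m+1},p_{c+m})$ demanded by Definition~\ref{def:crosscert}(b); if the orientations happened to be opposite, I reverse $\pi_2$ (replacing $S'$ by its reversal) to flip the triple and obtain the required certificate, which is permitted by the last sentence of Definition~\ref{def:crosscert}.

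The main obstacle I anticipate is the bookkeeping at the boundary of $\Theta$: if the maximal overlap reaches the route endpoints, one or both of $p_c,p_{c+m+1}$ degenerates to a trinode, and the ``port'' at that side collapses to a single point on the boundary of $\Theta$; I will have to verify that the alternation argument still extracts the correct cyclic condition in each such case. The sleeve viewpoint from Definition~\ref{def:sleeve}, applied locally to the shared run of tricells so that repeated tricells become independent copies, is what makes this bookkeeping clean.
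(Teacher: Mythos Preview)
Your proposal is correct and is precisely the adaptation the paper points to: the paper gives no detailed argument for this claim (it is marked with \qed\ after the remark that it follows ``by adapting the arguments of Lemma~\ref{lem:noncrossing}''), and your maximal shared-tricell run together with the Jordan-curve argument on the lifted sleeve is exactly that adaptation. One small caveat: your closing fallback of reversing $S'$ to ``flip the triple'' is not how reversal acts---reversing $S'$ realigns the direction of the shared run (possibly yielding a longer overlap) rather than toggling one cyclic orientation in place---so the direction of $\pi_2$ should be committed to at the outset, as your first paragraph already indicates, after which condition~(b) follows directly from the single proper crossing and the fallback is never needed.
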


We say that there exist {\em two independent crossing certificates}
for the $T$-sequences $S,S'$ if there are crossing certificates
$(c,d,m)$ and $(c',d',m')$ for $S,S'$
(each one up to possible reversal of $S'$ as in Definition~\ref{def:crosscert}),
such that the set of central tricells of $(c,d,m)$ is disjoint from
the set of central tricells of $(c',d',m')$.
The following can then be straightforwardly obtained from
Lemma~\ref{lem:noncrossing}:
\begin{claim}\label{clm:indepcertif}
Let $(G',T)$ be a full trinet of an edge-weighted plane graph~$G$,
and $u_i,v_i$, $i=1,2$, be four distinct trinodes.
Assume that $S_i$ from $u_i$ to $v_i$ are proper $T$-sequences.
In $G'\cup T$, there exist simple arcs $b_i$ from $u_i$ to $v_i$
such that, for $i=1,2$,
\begin{itemize}\parskip0pt
\item[--]$b_i$ is contained in the alley of a $u_i$--$v_i$ route
having the $T$-sequence~$S_i$, and
\item[--]$b_1$ intersects $b_2$ in at most one point,
\end{itemize}
if and only if there exists no two independent crossing certificates for~$S_1,S_2$.
\qed\end{claim}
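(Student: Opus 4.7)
My plan is to establish both directions of the equivalence by extending the arguments already developed in the proof of Lemma \ref{lem:noncrossing}, which handled the special case of zero crossings. The core insight is that each crossing certificate corresponds to a forced topological crossing in a specific region of the plane, and independent certificates force crossings in \emph{disjoint} regions.

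For the forward direction, I would argue by contradiction. Suppose simple arcs $b_1, b_2$ exist with $|b_1 \cap b_2| \leq 1$, and suppose also there are two independent crossing certificates $(c,d,m)$ and $(c',d',m')$ for $S_1, S_2$ (up to possible reversal of $S_2$). Let $U$ and $U'$ denote the unions of central tricells of the two certificates respectively; by the independence hypothesis, the sets of central tricells are disjoint, and since distinct tricells are disjoint as plane regions, $U \cap U' = \emptyset$ in the plane. Now I would reuse the sleeve argument from the proof of Lemma \ref{lem:noncrossing}: restricting attention to $U$, the sleeve copies $K_1 \subseteq L_1$ and $K_2 \subseteq L_2$ coincide as plane graphs, and the four ``boundary triedges'' $p_c, p'_d, p_{c+m+1}, p'_{d+m+1}$ appear on the outer face of $K_1 = K_2$ in a cyclic order forcing, by the Jordan curve theorem, at least one crossing of $b_1$ with $b_2$ inside $U$. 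The analogous argument applied in $U'$ forces a second crossing, and since $U \cap U' = \emptyset$ these two crossings are distinct points, contradicting $|b_1 \cap b_2| \leq 1$.

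For the backward direction, I would distinguish two subcases. If no crossing certificate for $S_1, S_2$ exists at all, Lemma \ref{lem:noncrossing} directly yields arcs with $|b_1 \cap b_2| = 0$, which trivially satisfies the conclusion. Otherwise some certificate exists, but by hypothesis every pair of certificates shares at least one central tricell. I would fix one such certificate $(c_0, d_0, m_0)$ with central tricell sequence $R$, and build $b_1, b_2$ using representative dual paths of tricells as in the proof of Lemma \ref{lem:noncrossing}, with one modification: within every tricell $\theta$ that is a central tricell of $R$, the representatives are chosen so that the portions of $b_1$ and $b_2$ through $\theta$ cross transversally; within every other tricell, the representatives are chosen so that the two portions avoid one another. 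This gives a single crossing corresponding to the certificate $(c_0, d_0, m_0)$. It then remains to verify that no \emph{other} forced crossing appears: for any putative additional crossing, an application of Claim \ref{clm:crossingr} to a small neighborhood of that crossing would produce a second crossing certificate, and by the no-independence hypothesis its central tricells must meet those of $(c_0, d_0, m_0)$; a case analysis on the overlap patterns of central tricell sequences (they form contiguous subranges of the tricell sequences of $S_1, S_2$) then allows the representative choice inside the shared tricells to be adjusted so that the ``second'' crossing is absorbed into the single one we already committed to.

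The main obstacle will be the verification step of the backward direction: making rigorous the assertion that any two non-independent certificates can have their ``forced crossings'' merged into a single physical crossing via a coherent choice of representative paths. Boundary cases where a certificate has $c \in \{0,n\}$ or $d \in \{0,\ell\}$ (so that $p_c$ is the terminal $u_1$ or $v_1$, or analogously for $S_2$) require special handling because the cyclic-order condition around $\theta_c$ or $\theta_{c+m}$ degenerates; I would address these by arguing that such a certificate's central tricell subsequence is strictly contained in—or abuts—any other sharing a central tricell with it, so the same ``steer to one side of $b_1$'' choice of representatives still resolves the situation. Once this is in place, the equivalence follows.
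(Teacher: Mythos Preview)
The paper does not give a proof of this claim: it is stated with a trailing \qed\ and prefaced only by the remark that it ``can then be straightforwardly obtained from Lemma~\ref{lem:noncrossing}.'' Your proposal is therefore not competing against a written argument, and the route you sketch---reuse the sleeve/Jordan-curve direction of Lemma~\ref{lem:noncrossing} for necessity, and adapt the representative-path construction for sufficiency---is exactly the kind of extension the paper has in mind. Your forward direction is sound: each certificate forces a crossing inside the union of its central tricells (that is what the $K_1=K_2$ argument yields when projected back to the plane), and disjointness of the two central-tricell sets makes the two forced crossing points distinct.

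Where you should be more careful is the backward direction. Two points. First, in your construction you speak of ``the portions of $b_1$ and $b_2$ through $\theta$'' for a central tricell $\theta$, but both $T$-sequences may visit $\theta$ several times; you need to specify which visits are meant (namely the ones at indices $c_0{+}j$ and $d_0{+}j$) and argue that the representative choice there does not inadvertently create crossings with \emph{other} visits of the same tricell. Second, ``no two independent certificates'' only says every \emph{pair} shares a central tricell, not that all certificates share a common one; your merging argument implicitly assumes a single pivot. A cleaner way to organize the case analysis is to pick a certificate whose central index range in $S_1$ is inclusion-maximal, observe that any other certificate must overlap it in a contiguous subrange (central-tricell runs are contiguous by Claim~\ref{cl:Tseq-tricells}), and show that the cyclic-orientation condition of Definition~\ref{def:crosscert} then forces the two ``entry/exit'' patterns to be compatible with a single transversal crossing. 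Once that is made precise, the argument goes through.
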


The implementation of step \ref{it:scriP}\ref{it:crossingtwice},
using Claim~\ref{clm:indepcertif}, hence
simply checks by brute force for the existence of two independent crossing
certificates for~$S_f,S_{f'}$.

\begin{proof}[Proof of Theorem~\ref{thm:rigidalg}]
In this proof, we will use some of the terminology and notation from the proof 
of Lemma~\ref{lem:shortTseq}, and refer to the notation
of Algorithm~\ref{alg:rmei}.

Consider arbitrary $\scri P = \{S_f\}_{f\in F}$ as in the step~\ref{it:scriP}.
The value of $\crg_{\scri P}$ computed in the step
\ref{it:scriP}\ref{it:crossingsum}, provided that $\crg_{\scri P^\circ}<\infty$,
is a lower bound on the number of crossings of any feasible solution of
$\rMEI(G,F)$ such that, for each $f\in F$, the $T$-sequence of the arc
$f$ is exactly~$S_f$.
This fact follows directly from Lemma~\ref{lem:noncrossing}
(for the part $|X_{\scri P}|$)
and from Lemma~\ref{lem:pathinsleeve}
\smallskip
(for the part $\sum_{f\in F} len_{w}(\pi_{S_f})$).

By Lemma~\ref{lem:shortTseq}, there is an optimal feasible solution $D$ to
$\rMEI(G,F)$ such that, for every $f\in F$,
the arc of $f$ in $D$ has its $T$-sequence 
(with respect to the trinet $T$ from step \ref{it:ctrinet})
equal to some proper $S_f\in\scri S_f$ as computed in \ref{it:scriS},
and the step \ref{it:scriP}\ref{it:crossingtwice} does not apply to these
values by Claim~\ref{clm:indepcertif}.
Consequently, $\crg_{\scri P^\circ}\leq\rINS(G,F)$ by the lower-bound
argument from the previous paragraph.
Hence if we can prove that the step \ref{it:realize} indeed can compute
a drawing $D^\circ$ of $G+F$ with $\crg_{\scri P^\circ}$ weighted crossings,
provided $\crg_{\scri P^\circ}<\infty$,
then we complete the proof of Theorem~\ref{thm:rigidalg}.

\smallskip
For $f\in F$, let $b_f$ denote a realization of the edge $f$
as an arc following the route of $\pi_{S^\circ_f}$
(such that the $T$-sequence of $b_f$ is $S^\circ_f$),
before the postprocessing step \ref{it:realize}\ref{it:postp}.
By the minimality choice in step~\ref{it:choosemin}, we can be sure that
$b_f$ does not cross itself: the self-crossing would induce a non-contractible loop with at least one crossing over $G'$,
but then there exists a $T$-sequence $S'_f\subset S^\circ_f$---the sequence $S^\circ_f$ without the triedges forcing the loop.
Replacing these two sequences in $\scri P^\circ$ results in a smaller
number of crossings on $f$ while not increasing the number of crossings at any term in the summation considered in step~\ref{it:scriP}\ref{it:crossingsum}.

Let $D_1$ denote the drawing of $G+F$ made of $G$ and $\{b_f:f\in F\}$.
Observe that $\crdr{D_1}(E(G),F)=
	\sum_{f\in F}\, len_{w}(\pi_{S_f^\circ})$.

Fix some~$f,f'\in F$, $f=uv$ and $f'=u'v'$, 
such that $b_f,b_{f'}$ cross each other (properly) more than once.
Consider the point set $p:=b_f\cup b_{f'}$
with the outer face incident with, say, the trinode~$u$.
If any of the bounded faces of $p$ contained a trinode (which cannot be~$u$)
of~$T$, then we could ``split'' the $T$-sequences $S^\circ_f$ and $S^\circ_{f'}$
into $S^1_f,S^2_f$ and $S^1_{f'},S^2_{f'}$ each, such that
for each of the pairs $S^1_f,S^1_{f'}$ and $S^2_f,S^2_{f'}$
there would exist a crossing certificate by Claim~\ref{clm:crossingr}.
This would, in turn, provide two independent crossing certificates for
the $T$-sequences $S^\circ_f,S^\circ_{f'}\in{\scri P^\circ}$ and
by the step \ref{it:scriP}\ref{it:crossingtwice},
it would contradict $\crg_{\scri P^\circ}<\infty$.

Consequently, all the bounded faces of $p=b_f\cup b_{f'}$ are free of
trinodes of~$T$.
This, in particular, means that if we construct $b_f^1$ from $b_f$ by
re-routing it along a section of $b_{f'}$ between two consecutive shared
points of $b_f\cap b_{f'}$, then the $T$-sequence of $b_f^1$ would
again be $S^\circ_f$.
Moreover, since $b_f,b_{f'}$ have been chosen from their respective
shortest routes, the t-weight of $b_f^1$ would be equal to
$len_{w}(\pi_{S^\circ_f})$ (which is the t-weight of original~$b_f$).
Iterating this process, we arrive at a drawing $D_2$ of $G+F$ satisfying the
following:
\begin{enumerate}[i.]\parskip0pt
\item no two edges of $F$ in $D_2$ cross more than once,
\item $\crdr{D_2}(E(G),F)\leq\crdr{D_1}(E(G),F)$.
\end{enumerate}
What remains is to observe that two edges $f,f'\in F$ properly cross each other in $D_2$
only if $\{f,f'\}\in X_{\scri P^\circ}$.
Indeed, if $f,f'$ properly cross in $D_2$, then this crossing is the only one
and there exists a crossing certificate for $\pi_{S^\circ_f},\pi_{S^\circ_{f'}}$
by Claim~\ref{clm:crossingr}, and then $\{f,f'\}\in X_{\scri P^\circ}$
due to the step \ref{it:scriP}\ref{it:crossingcert}.

To summarize, we get
\begin{eqnarray*}
\crg_{\scri P^\circ} &\leq&
	\rINS(G,F) \leq \crd(D_2)=\crdr{D_2}(E(G),F)+\crdr{D_2}(F,F)
\\	&\leq&
	\crdr{D_1}(E(G),F)+|X_{\scri P^\circ}|
	= \crg_{\scri P^\circ}
\end{eqnarray*}
which proves optimality of the solution computed by
Algorithm~\ref{alg:rmei}.

\medskip
Finally, we discuss the runtime bound of Algorithm~\ref{alg:rmei}.
Let $k=|F|$.
Step \ref{it:ctrinet} is performed in time $\OO(k^2\cdot|V(G)|)$ using
$3(2k)-6$ calls to a linear shortest path algorithm.
Step \ref{it:scriS} takes time
$\OO(k\cdot2^{poly(k)}\cdot k|V(G)|)$
by Corollary~\ref{cor:pathinsleeve}.
Step \ref{it:scriP} is iterated $\OO(2^{poly(k)\cdot k})$ times,
and each iteration takes time polynomial in $k$ (independently of $G$)
even by brute force.
Step \ref{it:choosemin} takes only time $\OO(2^{poly(k)\cdot k})$.
Finally, step \ref{it:realize} performs $k$ computations in $\OO(|V(G)|)$,
to realize each $f\in F$ in~$G$, and then
a number of concurrent re-routings which can be bounded by an amortized analysis:
every of the $k$ routes is of length $\OO(|V(G)|)$ and each element of it
could be re-routed at most once towards each of the $k-1$ remaining routes,
summing to $\OO(k^2\cdot|V(G)|)$.

The above analysis sums up to overall $\OO\big(2^{poly(k)}\cdot|V(G)|\big)$ time.
\end{proof}
\end{accumulate}

\section{General MEI}\label{sec:nonrigid}
\begin{accumulate}
\ifaccumulating{\subsection{Supplements for Section~\ref{sec:nonrigid}}}
\end{accumulate}

Now, we may turn our attention to the general MEI problem, where the embedding of the planar graph $G$ is not prespecified.
See also the appendix for details.
Recall that triconnected planar graphs have a unique embedding (up to mirroring), but already biconnected graphs
have an exponential number of embeddings in general.
As it is commonly done in insertion problem since \cite{GMW05}, we will use the \emph{SPR-tree} datastructure (sometimes also known as SPQR-tree) to encode and work with all 
these possible embeddings. It was first defined in slightly different form in~\cite{DT96}, based on prior work of~\cite{BM90,Tut66}. 
It can be constructed in linear time~\cite{HT73,GM01} and only requires linear space.

\begin{defn}[SPR-tree, cf.~\cite{mchDiss}]\label{def:SPR}
Let $G$ be a biconnected graph with at least three vertices. 
The \emph{SPR-tree} $\scri T$ of $G$ is the unique smallest 
tree satisfying the following properties: 
\begin{enumerate}[i)]
\item Each node $\nu$ in $\scri T$ holds a specific (small) graph $S_\nu=(V_\nu,E_\nu)$,
with $V_\nu\subseteq V(G)$, called a \emph{skeleton}.
Each edge $e$ of $E_\nu$ is either a {\em real} edge $e\in E(G)$,
or a {\em virtual} edge $e=xy\not\in E(G)$ (while still, $x,y\in V(G)$).
  \item $\scri T$ has three different node types with the following skeleton structures:
{\bf (S)} $S_\nu$ is a simple cycle;
{\bf (P)} $S_\nu$ consists of two vertices and
at least three multiple edges between them;
{(\bf R)} $S_\nu$ is a simple triconnected graph
on at least four vertices.
\item\label{it:emuenu}
For every edge $\nu\mu$ in $\scri T$ we have $|V_\nu\cap V_\mu|=2$. 
These two common vertices, say $x,y$, form a vertex $2$-cut
(a {\em split pair}) in~$G$. 
Skeleton $S_\nu$ contains a specific virtual edge $e_\mu\in E(S_\nu)$ that 
represents the node $\mu$
and, symmetrically, some specific $e_\nu\in E(S_\mu)$ represents~$\nu$;
both $e_\nu,e_\mu$ have the ends $x,y$. 
\item\label{it:gluev}
The original graph $G$ can be obtained by recursively applying the following 
operation of {merging}:
For an edge $\nu\mu\in E(\scri T)$, let $e_\mu$, $e_\nu$ be the 
pair of virtual
edges as in (\ref{it:emuenu} connecting the same $x,y$.
A {\em merged} graph \mbox{$(S_\nu\cup S_\mu)-\{e_\mu,e_\nu\}$}
is obtained by gluing the two skeletons together at $x,y$ and removing $e_\mu,e_\nu$.\defend
\end{enumerate}
\end{defn}

The central theorem of~\cite{GMW05} states that we can find an optimal embedding to insert a single edge $uv$ by
looking at the shortest path in $\scri T$ between a node whose skeleton contains $u$ and a node whose skeleton contains $v$.
For each skeleton along this path, one considers the partial routes between the virtual edge representing $u$ (or $u$ itself) and
the virtual edge representing $v$ (or $v$ itself). In case of S- and P-nodes this route requires no crossings (by choosing a suitable embedding 
in the latter case); for an R-node $\nu$, the route is a shortest path in the dual of its skeleton: if the primal edge is an original edge, 
the length of its dual edge is the primal edge's weight;
if the primal edge is a virtual edge $xy$, representing node $\mu$, the length of its dual edge is the minimum-$xy$-cut in $P_\mu$, where we 
$P_\mu$ is the \emph{pertinent graph of $\mu$} arising from merging all skeletons of the subtree rooted at~$\mu$, minus the edge $e_\nu$.
By picking \emph{any} embedding of $P_\mu$ and computing a shortest dual path through it, we can compute this cut size in linear time.
See~\cite{GMW05} for details. 

We consider our SPR-tree $\scri T$ of $G$ rooted at any node, and devise a dynamic programming scheme 
to solve MEI bottom-up over $\scri T$. We observe that 
every non-root skeleton $S_\nu$ contains a virtual edge $e_\varrho$ that represents its father in $\scri T$. Any further
virtual edges correspond to children of $\nu$ in $\scri T$.
Since we already know how to solve r-MEI, it shall suffice to describe
which r-MEI problems we need to solve at each SPR-tree node $\nu$ (including the root node), assuming we already solved the corresponding subproblems at their children.
The overall MEI solution can then be obtained by selecting a solution in the root with the least number of crossings. 

We say a virtual edge in $S_\nu$ is \emph{dirty} if
it contains an end vertex of a new edge $f\in F$.
Hence, at most $2k$ edges (a constant number) of $S_\nu$ are dirty.
For R- and S-nodes, we only have to consider their unique (up to mirroring, in case of R) embeddings. A P-node whose skeleton contains 
$p$ edges, however, allows $(p-1)!$ embeddings. Based on the following claim (which can be shown with a straight-forward redrawing argument), we only need to consider up to $(2k)!$ embeddings for each P-node,
which is constant for constant~$k$.
\begin{claim} 
Let $\nu$ be a P-node in the SPR-tree $\scri T$ of $G$. There is an optimal embedding of $G$ for the MEI problem, where all non-dirty virtual edges
are consecutive in the embedding of $S_\nu$.
\end{claim}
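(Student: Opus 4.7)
The plan is to start from an arbitrary optimal embedding $\Gamma$ of $G$ realizing $\INS(G,F)$ and to show, by a sequence of embedding-level modifications confined to the P-node $\nu$, that one can turn $\Gamma$ into an equally optimal embedding in which all non-dirty virtual edges of $S_\nu$ are consecutive. Let the parallel virtual edges of $S_\nu$ between $V_\nu=\{x,y\}$ be $e_1,\dots,e_p$ in their cyclic order around $x$ under $\Gamma$. Partition them into the set of dirty edges and the set of non-dirty edges. The key structural observation, which drives everything, is that if $e_\mu$ is non-dirty then its pertinent graph $P_\mu$ contains no endpoint of any $f\in F$; consequently the only way a crossing can occur strictly inside $P_\mu$ is via an $F$-edge that enters $P_\mu$ through $\{x,y\}$, traverses it, and leaves again through $\{x,y\}$, paying at least the minimum $xy$-cut of $P_\mu$ per traversal.

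The main redrawing step is local. Suppose the non-dirty edges form more than one maximal consecutive block in the cyclic order around $x$; pick two such blocks $B_1,B_2$. Let $\Gamma'$ be obtained from $\Gamma$ by relocating $B_2$ to be immediately adjacent to $B_1$ in the cyclic order at $\nu$, while keeping the relative cyclic order of every dirty virtual edge unchanged (and the embedding of $G$ outside $S_\nu$ untouched). I would then argue that $\Gamma'$ admits a drawing of $G+F$ with no more crossings than the given optimal drawing under $\Gamma$: for each $f\in F$ passing through $\nu$, pick the routing direction around $\{x,y\}$ in $\Gamma'$ that avoids the block $B_1\cup B_2$; the sequence of dirty pertinent graphs it then traverses is exactly a subsequence of what it traversed under $\Gamma$ (this uses that the relative cyclic order of dirty edges is preserved), so the contribution of dirty pertinent graphs does not grow, while all contributions from $B_2$'s non-dirty pertinent graphs vanish. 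Crossings in all skeletons other than $S_\nu$ are unaffected since nothing outside $\nu$ is disturbed.

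Iterating the move reduces the number of maximal non-dirty blocks by one each time, so after at most $p$ steps all non-dirty virtual edges form a single consecutive run, producing the claimed optimal embedding. The hard part, where the argument must be made carefully, is the contribution coming from crossings \emph{between} pairs of $F$-edges at $\nu$: one has to verify that the forced mutual-crossing structure of any two $F$-edges passing through $\nu$ is determined solely by the relative cyclic order of the dirty virtual edges through which they enter and leave, which is exactly the invariant the transformation preserves, together with handling the boundary case in which some $f\in F$ has an endpoint at $x$ or $y$ itself (in which case $f$ attaches directly to the P-node and the same direction-choice argument still applies). This careful bookkeeping of $F$-$F$ crossings, together with the cut-based accounting inside non-dirty pertinent graphs described above, yields that $\Gamma'$ is no worse than $\Gamma$, and hence remains optimal.
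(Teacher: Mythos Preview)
Your redrawing approach is the right general strategy and matches the spirit of what the paper calls a ``straight-forward redrawing argument'' (the paper gives no details). However, the specific transformation you describe does not work: the claim that after relocating $B_2$ next to $B_1$ and rerouting each $f$ to avoid $B_1\cup B_2$, ``the sequence of dirty pertinent graphs it then traverses is exactly a subsequence of what it traversed under~$\Gamma$'' is false.

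Concretely, take the cyclic order $D_1,N_1,D_2,N_2,D_3$ in~$\Gamma$, with $N_1,N_2$ non-dirty of min-cut~$1$ each and crossing any dirty $P_{D_i}$ costing~$100$. Let $F$ contain two parallel edges from $P_{D_1}$ to $P_{D_2}$ and one edge from $P_{D_2}$ to $P_{D_3}$. In~$\Gamma$ the optimal routing has the first two edges cross $N_1$ and the third cross $N_2$, total cost~$3$. Your move yields $\Gamma'=D_1,N_1,N_2,D_2,D_3$. Now any $D_1$--$D_2$ route that \emph{avoids} $N_1\cup N_2$ must go through $P_{D_3}$, which was never traversed in~$\Gamma$; this is not a subsequence, and the cost jumps to~$200$. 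Even the \emph{optimal} routing in~$\Gamma'$ (crossing $N_1,N_2$) costs~$4>3$. So relocating $B_2$ to $B_1$'s position can strictly worsen the embedding.

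The claim is nonetheless true, but the target position must be chosen with care. One clean fix: among all non-dirty virtual edges, pick $e_{\mu^*}$ whose pertinent graph is crossed by the fewest $F$-passes in the optimal drawing~$D$, and move every other non-dirty $e_\mu$ to sit adjacent to~$e_{\mu^*}$. Removing $P_\mu$ from its old slot saves all its crossings; at the new slot, only passes that already crossed $P_{\mu^*}$ (and continue past it) can be forced through $P_\mu$, and by minimality of $e_{\mu^*}$ there are no more such passes than originally crossed~$P_\mu$. In the example above this corresponds to moving $N_1$ next to $N_2$ (giving $D_1,D_2,N_1,N_2,D_3$ with cost~$2$), not the other way round. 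Your accounting of $F$--$F$ crossings via the preserved dirty cyclic order is fine once the cost-monotonicity of the move itself is established.
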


Let $S'_\nu$ be a considered embedding of $S_\nu$. Consider each virtual edge
$xy$, representing node $\mu$ (possibly, $\mu$ is the father node), in $S_\nu$.
If $xy$ is not dirty, we set its weight to the size of the minimum-$xy$-cut in
$P_\mu$ (as for the single edge insertion case). If $xy=e$ is dirty, we modify
it with the following gadget: Set the weight of $e$ to $\infty$, and add two new
\emph{side edges} $e',e''$ connecting $x$ and~$y$. One is directly to the left,
the other directly to the right of $e$. We will further modify these side edges
in the following.

Consider what can happen at the subdrawing of (embedded) component $P_\mu$ in
the context of whole $G$ when considering any specific new edge $f\in F$: (i) if
$f$ has exactly one end in $P_\mu$, it will enter the component; (ii) if $f$ has
no end in $P_\mu$, it may cross through the component; (iii) if $f$ has both
ends in $P_\mu$, it may leave the component and re-enter $P_\mu$ at another
position. Furthermore, and in contrast to the single edge insertion, it may
happen that $f$ (independent of its end points) crosses $P_\mu$ multiple times.
However, since we consider a fixed embedding $S'_\nu$, we know from Lemma~\ref{lem:shortTseq} that
the latter number is bounded by a constant, depending only on $k$. Hence, there
are only a bounded number of enterings/leavings at $P_\mu$, and we can simply
consider all such possible situations (including all possible orders of the
enterings/leavings). For each such situation, we now subdivide the edges $e'$
and $e''$ accordingly: chiefly put, if, e.g., we consider the case of an edge
$f=uv$ coming from a vertex $u\not\in V(P_\mu)$ and crossing $P_\mu$ twice before finally entering it to reach $v\in
V(P_\mu)$, we generate (for $f$) overall five vertices on $e',e''$, say $v_1,\ldots,v_5$. Within the
context of the dynamic programming subproblems at $P_\mu$, we then consider (for
each embedding of $S_\mu$) the r-MEI problem w.r.t.\ \emph{subedges}
$v_1v_2$, $v_3v_4$, $v_5v$. Overall, we have to store the best 
solution (over all embeddings of $S_\mu$) for each r-MEI
problem constructed of all such subedges, for each edge $f\in F$, each
possible number of crossings of $f$ through $P_\mu$, each possible assignment of
thereby induced subdivision vertices to $e'$ or $e''$, and all possible orders
at $e'$ and $e''$. Within a subproblem at $\nu$, we then consider the edge 
$uv_1$ instead of $f$ (in fact, this edge may be further split into several 
subedges due to further dirty virtual edges in $S_\nu$ considered to be crossed 
by $f$, and/or if $u$ is contained in a pertinent graph of another virtual 
edge).

Hence, we only need to store a constant (bounded by a function in $k$) number
of solutions at each SPR-tree node. Each solution can be obtained using the above algorithm for r-MEI
in $\OO(|V(G)|)$ time, and there are at most $\OO(|V(G)|)$ SPR-tree nodes. Instead of the na\"ive quadratic runtime bound,
we even achieve a linear runtime bound by observing that
the union of all skeletons is still only of linear size. 
We obtain, as given in the introduction:
\begin{thm}[The biconnected case of Theorem~\ref{thm:main}]\label{thm:bicon}
 Let $G$ be a planar biconnected graph on $n$ vertices, and $F$ a set of $k$ new edges (vertex 
pairs, in fact) where $k$ is a constant.
We can solve $\MEI(G,F)$ in $\OO(n)$ time. 
\end{thm}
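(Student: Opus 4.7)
The plan is to combine the rigid-MEI algorithm of Section~\ref{sec:rigidMEI} (Theorem~\ref{thm:rigidalg}) with a bottom-up dynamic programming over the SPR-tree $\scri T$ of $G$ (Definition~\ref{def:SPR}). Root $\scri T$ at an arbitrary node, and process nodes from the leaves to the root. At each node $\nu$ we will solve a bounded number of (weighted) rigid-MEI instances on skeletons $S_\nu$, whose sizes sum to $\OO(n)$. Since $k=|F|$ is a constant, the number of instances stored per node and the ``interface'' information passed upward along every tree edge is bounded by a function of $k$ only, giving the desired linear total running time.

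First I would deal with the case analysis on node type. For S-nodes and R-nodes the skeleton embedding is unique (up to mirroring), so there is essentially nothing to choose. For a P-node, the claim from the excerpt reduces the number of relevant embeddings of $S_\nu$ to $\OO((2k)!)$ by forcing all non-dirty virtual edges to be placed consecutively; for each such embedding we treat the skeleton as fixed and invoke Theorem~\ref{thm:rigidalg}. Second, I would describe how to weight the virtual edges of $S_\nu$ so that a rigid-MEI computation on $S_\nu$ correctly reflects what happens inside the pertinent graph $P_\mu$ of a child $\mu$. A non-dirty virtual edge representing $\mu$ receives weight equal to the minimum $xy$-cut in $P_\mu$, exactly as in the single-edge insertion algorithm of~\cite{GMW05}. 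A dirty virtual edge $e=xy$ is blown up into a gadget: $e$ gets weight $\infty$, and two parallel ``side edges'' $e',e''$ (one on each side of $e$ in the local embedding) act as channels through which the relevant new edges of $F$ enter/leave $P_\mu$.

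The heart of the argument is the gadget setup for dirty edges. For the pertinent graph $P_\mu$ I would enumerate all possible ``interface patterns'': for every $f\in F$, the number of times $f$ crosses the boundary of $P_\mu$ (bounded by a constant depending on $k$, by Lemma~\ref{lem:shortTseq}), together with the cyclic order in which the resulting crossings and endpoints appear on $e'$ versus $e''$. Each pattern subdivides $e',e''$ into a bounded number of new vertices $v_1,\dots,v_t$, and splits each $f\in F$ into a list of subedges $v_iv_{i+1}$ internal to $P_\mu$ and one subedge attaching to the actual endpoint. For every pattern we store the best cost (over all embeddings of $S_\mu$) of the induced r-MEI subproblem on the pertinent subgraph, and then pass this menu up to $\nu$, where Theorem~\ref{thm:rigidalg} is applied to $S_\nu$ under one fixed interface choice for each child. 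At the root we select the cheapest overall combination.

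The main obstacle, and the step I expect to need most care, is precisely this gadget construction and the accompanying proof that (i) some optimal global drawing of $G+F$ respects one of the enumerated interface patterns at every $P_\mu$, and (ii) the local costs stored at the children combine additively with the cost of the rigid-MEI instance at $\nu$ to give the true MEI cost. Correctness of (i) is a consequence of Lemma~\ref{lem:shortTseq} applied to the fixed-embedding skeleton around $P_\mu$, bounding the number of interface crossings by a function of $k$; (ii) follows because the $\infty$-weight on the original dirty edge forces every $F$-edge to cross only through the subdivided side edges exactly as prescribed by the pattern. For the runtime, I would argue by an amortized/linear-space analysis: the sum of skeleton sizes over all SPR-tree nodes is $\OO(n)$; each rigid-MEI call runs in $\OO(2^{\mathit{poly}(k)}\cdot|V(S_\nu)|)$ by Theorem~\ref{thm:rigidalg}; we perform $2^{\OO(\mathit{poly}(k))}$ such calls per node (one per embedding and per interface pattern); and since $k$ is constant all $k$-dependent factors collapse into $\OO(1)$, giving total time $\OO(n)$ as required.
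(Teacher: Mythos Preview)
Your proposal is correct and follows essentially the same approach as the paper: bottom-up dynamic programming over the SPR-tree, min-cut weights on non-dirty virtual edges, the $\infty$-weight-plus-side-edges gadget on dirty ones, enumeration of a $k$-bounded set of interface patterns (what the paper calls \emph{routing queries}) justified via Lemma~\ref{lem:shortTseq}, and a call to Theorem~\ref{thm:rigidalg} for each combination, with the linear bound coming from the $\OO(n)$ total skeleton size. The only point you gloss over that the paper spells out is the step from Lemma~\ref{lem:shortTseq} (which bounds triedge crossings in a trinet) to a bound on how often an $F$-edge can cross a virtual edge of the skeleton; the paper states this as a separate corollary, but your reference to ``Lemma~\ref{lem:shortTseq} applied to the fixed-embedding skeleton'' is the right idea.
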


For essentially all known insertion algorithms (in particular single edge insertion~\cite{GMW05}, vertex 
insertion~\cite{insertvertex}, and MEI approximation~\cite{CHicalp}), one can typically first describe
the case of biconnected graphs (using SPR-trees). Then, it is relatively straight-forward to 
lift the algorithms to connected graphs, by considering BC-trees (see below). Interestingly, this seems much
more complicated in case of exact MEI:

Consider the well-known block-cut tree (BC-tree) to decompose any connected graph into its blocks (biconnected components).
Using analogous techniques as in \cite{CHicalp}, we extend our dynamic programming approach by amalgating the BC-tree with
the blocks' respective SPR-trees, to obtain a linear-sized \emph{con-tree}, with an additional node type \textbf{C}, for cut vertices.
In our bottom-up approach, at a cut vertex $c$, we need to consider all possibilities to ``glue'' the $c$-incident dirty blocks (blocks with at least one end of some edge $f\in F$) 
together.  However, we cannot easily bound this number by a function purely in $k$: we not only have to consider all orders of these blocks, but also all possible nestings, which introduces a dependency on $\Delta_\mathrm{cut}$,
the maximum degree of the cut vertices in $G$. Hence, for only connected $G$, we obtain the slightly weaker result:
\begin{thm}[The connected case of Theorem~\ref{thm:main}]\label{thm:connected}
 Let $G$ be a planar connected graph on $n$ vertices, and $F$ a set of $k$ new edges (vertex 
pairs, in fact), where $k$ and the maximum degree of the cut vertices in $G$ are constant.
We can solve $\MEI(G,F)$ in $\OO(n)$ time. 
\end{thm}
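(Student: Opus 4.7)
}
My plan is to lift the biconnected machinery from Theorem~\ref{thm:bicon} to the connected setting by replacing the SPR-tree with a hybrid decomposition tree, and then handle cut vertices as an additional node type in the bottom-up dynamic program. Concretely, I would first compute the block-cut tree (BC-tree) of $G$, and then replace each block node by the SPR-tree of that block, joining these at the corresponding cut vertices via a new node type \textbf{C}; call the resulting linear-sized object the \emph{con-tree} $\scri T^c$. Each C-node is associated with a cut vertex $c$ and carries, as its ``skeleton'', a star-like graph recording which blocks meet at $c$ (together with the two virtual edges representing the parent and the respective children in the con-tree). All R-, S-, and P-nodes continue to behave exactly as in the proof of Theorem~\ref{thm:bicon}, using the min-cut weighting for non-dirty virtual edges, the gadget with side edges $e',e''$ for dirty ones, and Algorithm~\ref{alg:rmei} for the local r-MEI subproblems.

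The genuinely new step is the handling of a C-node for a cut vertex $c$ of degree $d\leq\Delta_{\mathrm{cut}}$. In a general embedding of~$G$, one may not only pick an arbitrary cyclic order of the $c$-incident blocks around $c$, but also nest non-dirty blocks inside arbitrary faces of other blocks sharing~$c$. I would enumerate all rotation schemes around~$c$ and all consistent nestings of the (at most $d$) sub-embeddings at $c$; the total number of such arrangements is bounded by a function of $\Delta_{\mathrm{cut}}$ alone. For each such arrangement, I would, analogously to the P-node case, augment the virtual edge representing the father of $c$ with subdivision vertices on the two side edges $e',e''$ that encode how each $f\in F$ enters/leaves through $c$ (by Lemma~\ref{lem:shortTseq} the total number of such transits is bounded by a function of~$k$), and then recurse. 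At the root we take the minimum over all options.

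For correctness, the argument mirrors the biconnected case: the dynamic-programming value at a node $\nu$ represents the optimum over all embeddings of the pertinent graph $P_\nu$ combined with all valid interface patterns (subedge assignments on side edges and transit counts through~$P_\nu$), and these patterns capture every way a drawing of $G+F$ can interact with $P_\nu$. For a C-node the interface pattern additionally fixes the rotation and nesting at $c$; since the set of dirty blocks at $c$ has size at most $2k$ and non-dirty blocks interact with $F$ only through their minimum cut contribution (exactly as with non-dirty virtual edges in Section~\ref{sec:nonrigid}), an optimum overall drawing restricted to the neighbourhood of $c$ realizes one of the enumerated patterns. Combining optimal child solutions for each pattern therefore produces an optimal value at~$c$.

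For runtime, at each node of the con-tree I perform a number of r-MEI and pattern enumerations bounded by a function of~$k$ and $\Delta_{\mathrm{cut}}$. Each r-MEI invocation runs in time $\OO(2^{\poly(k)}\cdot|S|)$ on a skeleton $S$ by Theorem~\ref{thm:rigidalg}, so summing over all nodes of the con-tree and using that the total size of all skeletons is linear (inherited from the SPR-trees and BC-tree), the overall runtime is $\OO(n)$ whenever $k$ and $\Delta_{\mathrm{cut}}$ are constants. The main obstacle, and the reason the bound on $\Delta_{\mathrm{cut}}$ appears in the statement (unlike the biconnected case), is precisely the C-node enumeration: the number of valid rotations and nestings at~$c$ grows at least like $\Delta_{\mathrm{cut}}!$, and I do not see how to prune this further using only the constant-size ``dirty'' interface, since a non-dirty block may still be nested inside the face of a dirty block in many inequivalent ways.
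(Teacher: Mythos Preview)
Your high-level plan---build a con-tree by amalgamating the BC-tree with the blocks' SPR-trees and run the same bottom-up dynamic program with an extra \textbf{C}-node type---is exactly what the paper does. The differences are in how the cut-vertex interface is modelled, and there your description has a genuine gap. At a $2$-cut (a virtual edge $xy$) the interface naturally has two ``sides'', which is why the $e',e''$ gadget works; at a $1$-cut $c$ the interface instead has $\delta(c)$ many sectors, one for each face of the incident block around~$c$. Treating the C-node's skeleton as star-like with a parent \emph{virtual edge} and re-using the two side edges $e',e''$ does not encode which of these $\delta(c)$ faces a route (or a nested dirty child block) uses, and that information is essential for correctness. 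The paper's fix is different: at every non-C node adjacent to a C-node for~$c$, it planarly replaces $c$ by a cycle $C$ so that the former $c$-incident edges attach to distinct vertices of~$C$; routing queries then attach to one of the $\delta(c)=\OO(\Delta_{\mathrm{cut}})$ segments of~$C$, replacing the two-fold $e'/e''$ choice by a $\delta(c)$-fold one. The C-node itself carries no computation.

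Relatedly, your diagnosis of \emph{why} $\Delta_{\mathrm{cut}}$ enters is off. Non-dirty blocks at a cut vertex are irrelevant: no $F$-route needs to cross through them (entering and leaving would both happen at the single vertex~$c$), so they can be parked in any face and do \emph{not} incur a ``minimum-cut contribution'' the way non-dirty virtual edges do in the SPR part. What actually forces the $\Delta_{\mathrm{cut}}$ dependence is that a dirty child block can sit in any of the up to $\delta(c)$ faces of another dirty block around~$c$, and each route entering through~$c$ can do so via any of those faces; these choices cannot be bounded by a function of $k$ alone. Once you replace the two-side gadget by the cycle gadget and drop the min-cut weighting for non-dirty blocks at cut vertices, your argument goes through and coincides with the paper's, yielding a runtime of the form $\OO\big(n\cdot\Delta_{\mathrm{cut}}^{\Theta(k)}\cdot(\poly(k)!)^{\Theta(k)}\big)$.
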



\begin{accumulate}
As sketched in the main body of the paper, we first develop a dynamic programing algorithm over the SPR-tree decomposition 
$\mathcal{T}$ of planar $G$. This algorithm considers \emph{dirty nodes} bottom-up; a decomposition node $\nu$ is \emph{dirty}
if its pertinent graph contains at least one vertex incident to $F$. Observe that if a node is dirty, so is its parent. 
The root node (whose pertinent graph we may define as $G$ itself) is always dirty.

\paragraph{Subproblems at non-root nodes.}
We start with formally defining the subproblems to be solved and stored at each dirty non-root decomposition node.
Let $\nu$ be such a node and let $e=xy\in E(S_\nu)$ be the virtual edge in the skeleton of $\nu$ corresponding 
to its parent node $\varrho$. Recall that the pertinent graph $P_\nu$ arises from $S_\nu$ by merging the skeletons
of the subtree rooted at $\nu$ and removing the sole remaining virtual edge ($e$).
We consider the 3-partition of $F$ into $F_0,F_1,F_2$, 
where $F_0$ are the edges without an end in $V(P_\nu)\setminus\{x,y\}$,
$F_1$ are the edges with one end in $V(P_\nu)\setminus\{x,y\}$ and the other not in $V(P_\nu)$,
and $F_2$ are the edges with one end in $V(P_\nu)\setminus\{x,y\}$ and the other in $V(P_\nu)$.

By definition, the graph $P_+:=P_\nu+e$ is planar, and $e$ represents the ``rest of the graph'' disjoint from $P_\nu$.
We are, intuitively, interested in the best embedding $P^\circ_+$ of $P_+$ 
to 
\begin{enumerate}[(a)]
 \item route the edges of $F_1$ from a side of $e$ to its end in $V(P_\nu)\setminus\{x,y\}$; observe that we may care
from which side of $e$ the new edge emanates.
\end{enumerate}
 But these are not the only routes to
consider in an optimal solution: 
\begin{enumerate}[(a)]\addtocounter{enumi}{1}
 \item edges $uv\in F_2$ may be routed completely within $P_\nu$, or go from $u$ to some side of $e$ (into the ``rest of the graph''), and from some side $e$ (from the ``rest of the graph''; either the same or the other side) to $v$;
 \item any edge of $F$ may be routed through $P_\nu$, i.e., from one side of $e$ to the other side, without crossing $e$.
\end{enumerate}
Formally, we can define a \emph{routing query} as a pair $(s,t)$, where $s$ and $t$ are each either referencing
a specific side of $e$ or a vertex in $V(P_\nu)\setminus\{x,y\}$. We will use $e',e''$ to denote the two different sides of $e$.
In such a routing query, we ask for a routing of a new edge between $s$ and $t$ in $P_+$, without crossing over $e$.

Lemma~\ref{lem:shortTseq} (which holds for every fixed embedding, and hence for each possible embedding) showed that a triedge of trinet $T$ is crossed at most 
$8k^4=\OO(\poly(k))$ times. When computing a shortest route (w.r.t.\ some $T$-sequence) between two succeeding triedges, we clearly have the property
that any edge within the corresponding tricell is crossed at most once. Hence:
\begin{cor}
 In an optimal solution to $\MEI(G,F)$, each edge $f\in F$ crosses any edge $e\in G$ at most $\xi:=\poly(k)$ times.
\end{cor}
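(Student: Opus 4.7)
My plan is to combine Lemma~\ref{lem:shortTseq} with a local shortest-path argument for each tricell segment of $f$'s route. Fix any optimal drawing of $\MEI(G,F)$; this induces a specific plane embedding of $G$, reducing the instance to an r-MEI instance to which Lemma~\ref{lem:shortTseq} applies. I would then construct a shortest-spanning full trinet $(G',T)$ on the trinodes $N(T) := V(F)$. Since $T$ is a subdivision of a plane triangulation on $|N(T)| \le 2k$ vertices, Euler's formula gives $\OO(k)$ triedges in $T$.

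By Lemma~\ref{lem:shortTseq}(b), we may assume each $f \in F$ is drawn so that no triedge appears in its $T$-sequence $S_f$ more than $8k^4$ times; combined with the triedge bound, this yields $|S_f| = \OO(k^5)$. Thus the arc of $f$ is partitioned by its triedge crossings into $\OO(k^5)$ sub-arcs, each contained in a single tricell of $T$. The next step is to bound the number of crossings of $f$ with $e$ contributed by each such sub-arc. Within a single tricell $\theta$, the sub-arc of $f$ follows a shortest dual path in the local primal-dual structure of $\theta$ (by optimality and a local version of the sleeve-based shortest-route argument of Lemma~\ref{lem:pathinsleeve}). A shortest dual path is simple and hence crosses any specific primal edge of $G'$ (in particular, any specific $G'$-piece of $e$ lying in $\theta$) at most once.

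Since the shortest-spanning property of $T$ limits how often a triedge can repeatedly cross the same $G$-edge $e$ (otherwise the triedge could be shortened by a shortcut along $e$, contradicting its locally-shortest property), the number of pieces of $e$ within any single tricell is bounded by a polynomial in $k$. Multiplied by the $\OO(k^5)$ tricell visits, this yields $\xi = \poly(k)$ crossings of $f$ with $e$ in total. The main technical obstacle will be rigorously establishing the per-tricell bound on $G'$-pieces of $e$: this requires carefully analyzing the locally-shortest property under re-routing via $e$ while staying within the two incident tricells, essentially mimicking the contractible-loop and face-based arguments in the proof of Lemma~\ref{lem:shortTseq} (and additionally ensuring that any such re-routing does not introduce more crossings with other edges of $F$ than it saves, as handled by Corollary~\ref{cor:decreasef}).
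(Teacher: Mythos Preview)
Your plan coincides with the paper's argument, which is just the two sentences preceding the corollary: bound $|S_f|$ via Lemma~\ref{lem:shortTseq}, then use that a shortest route between two successive triedge crossings lies in a single tricell and, being a simple dual path there, crosses each edge at most once.

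Where you diverge is in treating the passage from ``each $G'$-edge at most once per tricell visit'' to ``each $G$-edge at most $\poly(k)$ times total'' as the main technical obstacle, to be handled by locally-shortest re-routings and Corollary~\ref{cor:decreasef}. Two remarks. First, the locally-shortest property does not yield what you want: a shortcut of a triedge along a segment of $e$ can leave the two tricells incident to that triedge, so no contradiction to local shortness results; you would need global shortness, which is only guaranteed for a spanning subset of the triedges. Second, Corollary~\ref{cor:decreasef} concerns re-routing an edge of $F$ and bounding the resulting change in $F$--$F$ crossings; it is irrelevant to how often a \emph{triedge} meets a fixed edge of~$G$. The paper simply treats this step as immediate. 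The cleanest justification is that the trinet in Algorithm~\ref{alg:rmei}, step~\ref{it:ctrinet}, is built from shortest dual-path computations, so each triedge can be taken to subdivide $e$ only a bounded number of times; since a tricell is bounded by exactly three triedges, $e$ then contributes $O(1)$ pieces to any single tricell, and multiplying by the $O(k^5)$ tricell visits gives the bound directly.
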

Since this corollary also holds for virtual edges in a skeleton, we have the same upper bound for crossings
through a two-connected component $P_\nu$.

In our dynamic programming scheme, we will hence---for each possible set of routing queries---store the minimum number of crossings necessary
over all embeddings of $P_+$. 
A specific set of routing queries (to be described in details below) is hence a \emph{subproblem}, and the corresponding number of crossings (together with 
the embedding of $P_+$ and the corresponding routings, if desired) is a \emph{subsolution}. It remains to discuss the number of subproblems for $\nu$. 

\begin{lem}
Each subproblem specifies at most $\OO(\poly(k))$ routing queries.
The total number of subproblems to consider at any node $\nu$ is bounded by $\OO(\poly(k)!)$.
\end{lem}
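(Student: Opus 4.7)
The plan is to count, for a single fixed subproblem, how a given edge $f\in F$ can contribute routing queries, and then bound the combinatorial structure (side--assignment and orderings on $e',e''$) across all subproblems.

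First, consider the queries count inside one subproblem. Fix $f\in F$ and recall from the corollary preceding the lemma that in an optimal solution $f$ crosses the virtual edge $e$ (equivalently, its pertinent graph $P_\nu$) at most $\xi=\poly(k)$ times. Depending on the partition class of $f$, this produces:
\begin{itemize}
\item for $f\in F_0$: at most $\lfloor\xi/2\rfloor$ pass-through queries, each with both endpoints on $\{e',e''\}$;
\item for $f\in F_1$: one query with one endpoint in $V(P_\nu)\setminus\{x,y\}$ and the other on $\{e',e''\}$, plus at most $\lfloor(\xi-1)/2\rfloor$ additional pass-through queries;
\item for $f\in F_2$: either a single query with both endpoints in $V(P_\nu)\setminus\{x,y\}$, or two queries connecting each endpoint to $\{e',e''\}$, plus possibly additional pass-throughs.
\end{itemize}
In every case $f$ produces $\OO(\xi)=\OO(\poly(k))$ queries. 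Summing over $f\in F$, each subproblem carries at most $k\cdot\OO(\poly(k))=\OO(\poly(k))$ routing queries, proving the first claim.

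For the second claim, I would encode a subproblem by three ingredients: (i) for each $f\in F$, the number of query endpoints it places on $e'$ and on $e''$; (ii) the cyclic/linear order of all these endpoints along $e'$ and along $e''$; and (iii) the pairing telling which endpoints belong to the same ``piece'' of which $f$. Let $N=\OO(\poly(k))$ be the total number of endpoints on $\{e',e''\}$ (from (i), summed over $f$). Ingredient (i) contributes at most $(\xi+1)^k=\OO(2^{\poly(k)})$ choices; the $\{e',e''\}$-assignment per endpoint contributes at most $2^N=\OO(2^{\poly(k)})$; the orderings of at most $N$ labelled endpoints on each of $e',e''$ contribute at most $(N!)^2=\OO(\poly(k)!)$; and the piece-pairing among the $\OO(\poly(k))$ endpoints, already labelled by which $f$ they belong to, is an additional factor of $\OO(\poly(k)!)$. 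Since $2^{\poly(k)}$ is dominated by $\poly(k)!$, all these factors combine to $\OO(\poly(k)!)$.

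The main obstacle is avoiding double counting (equivalent subproblem encodings should not inflate the estimate) and making sure the enumeration is exhaustive: one must verify, by an argument in the spirit of Lemma~\ref{lem:shortTseq}, that in some optimal drawing the behavior of $F$ on $e$ is \emph{faithfully} captured by the tuple $(\text{counts},\text{sides},\text{orders},\text{pairings})$, so that no additional parameter (e.g.\ how pass-through pieces interleave with attached pieces) needs to be stored beyond the orderings on $e'$ and $e''$. Once this is established, the product of the per-ingredient bounds yields the required $\OO(\poly(k)!)$ estimate.
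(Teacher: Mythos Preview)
Your counting argument is essentially the same as the paper's: case-split on $F_0,F_1,F_2$, bound the number of queries per edge by $\OO(\xi)=\OO(\poly(k))$, then multiply a $2^{\poly(k)}$-type choice factor by an $\OO(\poly(k)!)$ ordering factor. The only unnecessary part is your final paragraph: this lemma is purely a counting statement about how many subproblem \emph{descriptions} there are, and the subproblems are \emph{defined} as tuples of (query set, side assignment, ordering), so no ``faithfulness'' argument is needed here---that concern belongs to the correctness of the overall dynamic program, not to this enumeration bound.
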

\begin{proof}
Consider the routing types (a)--(c) as above:
\begin{enumerate}[(a)]
\item For each edge $f=uv\in F_1$ with $u\in V(P_\nu)\setminus\{x,y\}$, we have to pick one of the two routing queries $(e',u)$, $(e'',u)$.

\item For each edge $f=uv\in F_2$, we have to pick one out of five options: (i) a single routing query $(u,v)$; (ii)--(v) two routing queries
$(u,e^{(1)})$, $(e^{(2)},v)$, with $e^{(1)},e^{(2)}\in\{e',e''\}$.

\item Finally, for each $f\in F$---except for those $F_2$-edges that picked option (i)---we have additional up to $\xi$ routing queries.
Each such additional query is of one of four types: $(e^{(1)},e^{(2)})$, with $e^{(1)},e^{(2)}\in\{e',e''\}$. 
\end{enumerate}

Overall, this gives up to $r := |F_1| + 2|F_2| + \xi|F| = \OO(\poly(k))$ routing queries.

The number of choices for such a set of routing queries is at most
$2^{|F_1|} \cdot 5^{|F_2|} \cdot (4^{\xi})^{|F|} = \OO(5^{\poly(k)})$.
However, up to now we did not consider a crucial interplay of these individual routing queries: We need to take all possible orderings 
of the edges emanating from a side of $e$ into account: Sides of $e$ arise at most $2r$ times over all queries, and 
we hence have at most $(2r)!$ orderings to consider. Thus, we overall obtain $\OO(5^{\poly(k)}\cdot \poly(k)!) = \OO(\poly(k)!)$ subproblems.
\end{proof}

\paragraph{Dynamic programming and root node.}

Finally, we have to describe how to use these subproblems to efficiently compute MEI.
The validity of this approach for non-dirty pertinent graphs was already established in~\cite{GMW05}.
As mentioned, we consider dirty nodes bottom-up.

Let $\nu$ be the considered SPR-tree node with skeleton $S_\nu$. Let $e_\varrho\in E(S_\nu)$ be the virtual edge corresponding to $\nu$'s father $\varrho$ (if it exists),
and $e_1,\ldots,e_\ell$ ($e'_1,\ldots,e'_{\ell'}$) the dirty (non-dirty) virtual edges in $S_\nu$ corresponding to the children $\mu_1,\ldots,\mu_\ell$ ($\mu'_1,\ldots,\mu'_{\ell'}$, respectively).
We need to show that we can solve each subproblem at $\nu$ purely using $S_\nu$ and the solutions to the subproblems
of the dirty children. In particular, we may not expand the skeleton to the pertinent graph (for which the $\nu$-subproblems are actually defined).

\medskip
\noindent\emph{Subproblems, embeddings, and the root.}
Assume, $\nu$ is a non-root node, then we have to solve $\chi:=\OO(\poly(k)!)$ many subproblems.
For each subproblem, we are given a set of $r:=\OO(\poly(k))$ routing queries, and want to find the optimal
solution over all embeddings of $P_+$. As a first step, we recall that there are only
a bounded number of embeddings for $S_\nu$ ($1$, $2$, and $(2k)!$ in case of an S-, R-, and P-node; let $\chi'=\OO((2k)!)$), and we may
hence enumerate each one explicitly.

The routing queries of the considered subproblem give rise to the following gadget:
Set the weight of $e_\varrho$ to $\infty$, and introduce two edges $e'$ and $e''$ parallel to $e_\varrho$, one directly to its left, 
one directly to its right. Now subdivide these to edges such that there is a vertex on $e'$ ($e''$)
if a routing query specifies the edge side $e'$ ($e''$). Furthermore, these vertices are ordered according to the specification
of the subproblem. Let $S'_\nu$ denote the embedded graph arising from this construction; we do not consider $e_\varrho$ as a virtual edge in the following any more.
Instead of considering the original new edge set $F$, we will now consider
the routing queries $(s,t)$ as edges $st\in F'$ (a new set $F'$) to be inserted.

If $\nu$ is the root node, we also have to consider all its skeleton's possible embeddings $S'_\nu$ individually, but there is no specific subproblem
to consider and we simply set $F':=F$ without any gadget construction.
From now on, we do the same steps, independent on whether considering the root node, or a specific subproblem at a non-root node.

\medskip
\noindent\emph{Virtual edges.}
For each non-dirty virtual edge $e'_i=ab$, $1\leq i\leq \ell'$, we set the weight of $e'_i$ to the minimum-$ab$-cut in 
the pertinent graph of $\mu'_i$. Note that these values can be constructed bottom up in overall linear time as a preprocessing.

Now, for each dirty virtual edge $e_i$, $1\leq i\leq\ell$ ($\ell\leq 2k$), we construct a gadget analogous to the gadget for $e_\varrho$:
Edge $e_i$ gets weight $\infty$, we add two edges $e'_i,e''_i$ left and right of $e_i$, and subdivide them according to a subproblem
at $\mu_i$.---To do this, we have to enumerate all possible choices of subproblems at all virtual edges. So
this construction yields $\chi'':=\OO((\poly(k)!)^{2k})$ different choices, each of which we consider individually. 
Observe: If, for some edge $f'\in F$ that resides within some $P_{\mu_j}$, we chose a routing query of type (b)(i), we do not 
consider subsolutions at any virtual edge where there are type (c) queries w.r.t.~$f'$. We call such an edge $f'$ a \emph{suppressed edge}.

We denote the so-modified plane graph by $S''_\mu$, and 
now have to decide what happens to our new edges $F'$. 
Each edge in $F'$ corresponds to some edge in $F$. Furthermore, we add each non-suppressed edge $f\in F$ to $F'$ if
it has no corresponding edge in $F'$.
We observe that for each vertex $w\in P_\nu\setminus S_\nu$ that is an end in $F'$, there is a unique
\emph{replacement vertex} $r(w)$ in $S_\nu$---it arises from a query $(r,r(w))$ (unoriented) within a subproblem at some dirty virtual edge.

For each original edge $f=uv\in F$, we hence get a partial order of routing queries corresponding to it:
either $u$ ($v$) or its replacement vertex $r(u)$ ($r(v)$, respectively) is in $S'_\nu$, so we start (end) there.
There may or may not be a routing query starting at $u$ (ending at $v$), which we would update to use $r(u)$ ($r(v)$) instead of $u$ ($v$).
Now, between this start and end, $f$ may have to ``visit'' former queries of type~(c) (whose ends are now represented by subdivision 
vertices at edges $e'_j,e''_j$, $1\leq j\leq\ell$). While these former queries are
totally ordered for each individual dirty virtual edge, it is unclear in which order $f$ visits
the different virtual edges. We will enumerate all possible orders to visit each of the $\OO(k)$
dirty virtual edges up to $\OO(\xi)$ times; there are hence $\chi''':=O\left(\frac{(k\,\xi)!}{k\,\xi!}\right)=\OO(\poly(k)!)$
different visit orderings for each edge of $f\in F$. Every visit order induces an unambiguous set $T_f$ of (new) routing queries 
to draw part of $f$ within $S_\nu$: from $f$'s start to the
vertex representing the beginning of a former query, from the vertex representing the end of the last former query to the
beginning of the next former query, and so on, until finally from the vertex representing the end of the last former query
to $f$'s end.
Such a set $T_f$ hence has size at most $\OO(k\xi)=\OO(\poly(k))$.

So, finally, we obtain an instance $\rMEI(S'_\nu,F'')$, where $|V(S'_\nu)| = \OO(|S_\nu|\cdot k\xi) = \OO(|S_\nu|\cdot\poly(k))$ and $F''$ is the set of all routing
queries (interpreted as unordered new edges) obtained from $F'$ by considering each $T_f$ (for all $f\in F$).
We have $|F''|=\OO(k\,\poly(k)) = \OO(\poly(k))$.
The total cost of the considered subsolution (and also for the solution at the root node) is the minimum number of crossings over all possible r-MEI instances constructed as above \emph{plus}
the numbers of crossings given by the corresponding individual subsolutions realized at the dirty virtual edges.
We have:
\begin{lem}
 We settle the root node---and any specific subproblem at a non-root node---with $\OO(\chi'\cdot \chi''\cdot \chi'''^k)$ calls to r-MEI. We settle each dirty non-root node
 with $\OO(\chi\cdot \chi'\cdot \chi''\cdot \chi'''^k)$ calls to r-MEI.
\end{lem}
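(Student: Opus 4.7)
The plan is straightforward multiplicative counting. For a fixed routing-query subproblem at a non-root node $\nu$ (equivalently, the sole scenario at the root, which has no outer subproblem loop), the construction above makes three essentially independent families of choices: (i) pick one of at most $\chi'$ embeddings of $S_\nu$; (ii) for each of the $\ell\le 2k$ dirty virtual edges $e_i$, pick one of the subsolutions stored at the corresponding child $\mu_i$, which yields a joint factor of at most $\chi''=\OO((\poly(k)!)^{2k})$; (iii) for each original edge $f\in F$, pick one of the at most $\chi'''$ visit orderings for the dirty virtual edges it must traverse, so over all at most $k$ edges we obtain $\chi'''^k$ choices in total. Every complete tuple of choices $((\text{i}),(\text{ii}),(\text{iii}))$ yields exactly one well-defined instance $\rMEI(S'_\nu,F'')$ to be solved; multiplying the factors gives $\OO(\chi'\cdot\chi''\cdot\chi'''^k)$ calls, which proves the first claim.

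For the second claim, a dirty non-root node $\nu$ additionally forces us to range over all $\chi=\OO(\poly(k)!)$ subproblems at $\nu$ itself (the possible routing query sets at $e_\varrho$). Each such choice determines a different $e_\varrho$-gadget and hence a different $S'_\nu$, and independently triggers the enumeration (i)--(iii) above. Multiplying the four factors yields the claimed $\OO(\chi\cdot\chi'\cdot\chi''\cdot\chi'''^k)$ bound.

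The main obstacle, although really just careful bookkeeping, is verifying that the three stages are genuinely multiplicative and do not undercount. The embedding choice in (i) fixes the cyclic order around each virtual edge, but the subsolutions at dirty children in (ii) are already indexed by \emph{all} their own subproblems (over all their own embeddings), so (i) and (ii) are orthogonal. Stage (ii) fixes the subdivision vertices on $e'_i,e''_i$ that each traversing $f\in F$ must hit, yet leaves the order in which $f$ visits \emph{different} dirty virtual edges free, which is exactly the degree of freedom enumerated in (iii). A minor sanity point is that any ``suppressed'' edge $f\in F$ (one already realized entirely within some $P_{\mu_j}$ via a type-(b)(i) query) contributes a trivial factor to (iii) and hence does not inflate $\chi'''^k$. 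The bound $\chi'''=\OO(\poly(k)!)$ itself relies on Lemma~\ref{lem:shortTseq}, which caps the number of times $f$ needs to traverse any fixed dirty virtual edge at $\OO(\poly(k))$, so that only $\OO(\poly(k))$-long visit orderings need to be enumerated.
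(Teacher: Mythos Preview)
Your argument is correct and matches the paper's approach: the lemma is stated there essentially without a separate proof block, since the four quantities $\chi,\chi',\chi'',\chi'''$ are introduced in the preceding paragraphs precisely as the sizes of the independent enumeration stages, and the lemma simply records their product. Your explicit verification that stages (i)--(iii) are orthogonal, and your handling of suppressed edges, spell out what the paper leaves implicit.
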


\begin{thm}[Detailed version of Theorem~\ref{thm:bicon}]
 Let $G$ be a planar biconnected graph on $n$ vertices, and $F$ a set of $k$ new edges (vertex 
pairs, in fact) where $k$ is a constant.
We can solve $\MEI(G,F)$ in 
$\OO(n\cdot (\poly(k)!)^{\Theta(k)})=\OO(n)\cdot k^{k^{\OO(1)}}$ time. 
\end{thm}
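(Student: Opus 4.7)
The plan is to combine, via dynamic programming on the SPR-tree $\scri T$ of $G$, the rigid-MEI algorithm of Theorem~\ref{thm:rigidalg} with the subproblem framework developed in the preceding lemmas. I process the dirty nodes of $\scri T$ bottom-up, storing at each non-root dirty node $\nu$ one subsolution per \emph{routing-query subproblem} attached to the virtual parent edge $e_\varrho$; at the root I simply compute a single best drawing. The backbone of the time bound is the observation that the SPR-tree has linear total skeleton size, so the whole cost becomes $\OO(n)$ multiplied by whatever $k$-dependent factor controls the work at a single node.

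First I would formalise correctness by induction over $\scri T$. The base case is a non-dirty pertinent graph, handled as in~\cite{GMW05} by setting each virtual edge's weight to the min-cut in the corresponding pertinent subgraph (precomputed bottom-up in linear time). For the inductive step at a dirty node $\nu$, I want that the value stored for a fixed routing-query subproblem equals the minimum over all embeddings of $P_\nu+e_\varrho$ and all compatible routings of $F$ realising that subproblem. This reduces, via the gadget construction already described, to: choose an embedding of $S_\nu$ (at most $\chi'=\OO((2k)!)$ options for P-nodes, and $\OO(1)$ for S-/R-nodes); choose, for each dirty child, a subproblem whose stored value will be plugged in ($\chi''=\OO((\poly(k)!)^{2k})$ global choices); choose, for each $f\in F$, an order in which $f$ visits the dirty virtual children ($\chi'''^{k}$ choices with $\chi'''=\OO(\poly(k)!)$); and then solve one rigid-MEI instance on the resulting $(S'_\nu,F'')$ with $|V(S'_\nu)|=\OO(|S_\nu|\cdot\poly(k))$ and $|F''|=\OO(\poly(k))$.

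By Theorem~\ref{thm:rigidalg} each such rigid-MEI call costs $\OO(2^{\poly(k)}\cdot|S_\nu|)$ time, and there are $\chi=\OO(\poly(k)!)$ subproblems to tabulate at $\nu$. The per-node work is therefore
\[
 \chi\cdot\chi'\cdot\chi''\cdot\chi'''^{k}\cdot 2^{\poly(k)}\cdot|S_\nu|
 \;=\; (\poly(k)!)^{\Theta(k)}\cdot|S_\nu|,
\]
and summing over $\nu\in V(\scri T)$ together with $\sum_\nu|S_\nu|=\OO(n)$ gives $\OO(n\cdot(\poly(k)!)^{\Theta(k)})$, which using $\poly(k)!=k^{\OO(k\log k)}$ collapses to $\OO(n)\cdot k^{k^{\OO(1)}}$.

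The hard part will be verifying that the routing-query abstraction is \emph{faithful}: that every optimal drawing of $G+F$ decomposes along $\scri T$ into interactions with each pertinent graph $P_\mu$ that match one of the enumerated query patterns. The critical ingredient here is Lemma~\ref{lem:shortTseq} applied to an arbitrary but fixed embedding of $G$, which bounds the number of times any $f\in F$ crosses a triedge, and hence crosses the $2$-cut boundary of any $P_\mu$, by $\xi=\poly(k)$. With this bound in hand, the $\chi$, $\chi''$, and $\chi'''$ enumerations cover all relevant boundary-crossing multisets and orderings, and the rerouting arguments already used in the proof of Theorem~\ref{thm:rigidalg} ensure that no non-enumerated pattern can improve the crossing count.
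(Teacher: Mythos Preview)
Your proposal is correct and follows essentially the same approach as the paper: dynamic programming over the SPR-tree with the enumeration factors $\chi,\chi',\chi'',\chi'''^{k}$, each rigid-MEI call costing $\OO(2^{\poly(k)}\cdot|S_\nu|)$ by Theorem~\ref{thm:rigidalg}, and $\sum_\nu|S_\nu|=\OO(n)$ to aggregate. One tiny slip: $\poly(k)!$ is $k^{\poly(k)}$ in general, not $k^{\OO(k\log k)}$, but your final bound $k^{k^{\OO(1)}}$ is unaffected.
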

\begin{proof}
First, due to Theorem~\ref{thm:rigidalg}, each individual r-MEI instance in our setting can be computed 
within $\OO(|V(S_\nu)|\cdot\poly(k)\cdot 2^{\poly(k)})=\OO(|V(S_\nu)|\cdot 2^{\poly(k)})$ time. Furthermore,
the union over all SPR-tree skeletons has still linear size $\OO(n)$. We hence obtain the overall runtime
\begin{align*}
\OO(\chi\cdot \chi'\cdot \chi''\cdot \chi'''^k\;\cdot\; n\cdot2^{\poly(k)}) &= \OO(n\cdot \poly(k)! \cdot (2k)! \cdot (\poly(k)!)^{2k} \cdot (\poly(k)!)^k \cdot 2^{\poly(k)}) \\
&= \OO(n\cdot (poly(k)!)^{\Theta(k)})
\end{align*}
\end{proof}

\paragraph{Connected Case.}
Until now, we only considered biconnected $G$. In case of only connected $G$, we can first decompose (in linear time)
$G$ into its biconnected components (blocks), and establish a BC-tree $\scri B$. This tree has two types of nodes: For each block of $G$,
we have a node of type {\bf(B)}; for each cut vertex in $G$, we have a node of type {\bf(C)}. We have an edge $\beta\gamma$ in $\scri B$ if, and only if, 
$\beta$ is a B-node, $\gamma$ is a C-node, and the block of $\beta$ contains the cut vertex of $\gamma$. 
We may root $\scri B$ arbitrarily at any dirty block; we say a block is \emph{dirty} if it contains at least one end of $F$ (other than 
possibly its parent cut vertex). Clearly, we can iteratively prune non-dirty B-leaves.

Now, we can construct a combined tree $\scri C$: For each block $B$ in $G$, we construct (and root) its SPR-tree $\scri T_B$.
In $\scri B$, we replace each B-node with the root vertex of the block's corresponding SPR-tree.
Now, we can run the dynamic programming algorithm over $\scri C$ instead of a single SPR-tree tree.

Let $\nu$ be a non-C-node whose parent is a C-node $\gamma$ corresponding to cut vertex $c\in S_\nu$. 
We need to redefine the subproblems to consider at $\nu$: instead of considering routing queries that attach to one of the two sides of the parent virtual edge,
our routing queries may now attach to $c$ in a specified order and through specified faces incident to $c$.
We therefore introduce the gadget---for each considered embedding $S'_\nu$ of $S_\nu$---obtained by planarly replacing $c$ by a simple cycle $C$. The $c$-incident edges
are attached to $C$ such that the contraction of $C$ again gives $S'_\nu$. 
When considering the routing queries, instead of the two choices of the side of the parent virtual edge, we now hence have a $\delta(c)=\OO(\Delta_\mathrm{cut})$-fold choice over the
segment of $C$ where to attach to, where $\Delta_\mathrm{cut}$ denotes the maximum degree over all cut vertices.

In our dynamic programming, we will perform no operation at C-nodes, but let $\nu$ now be a node with a C-child $\gamma$ corresponding to cut vertex $c\in S_\nu$.
Analogous to above---in each considered embedding $S'_\nu$ of $S_\nu$---we planarly replace $c$ by a cycles $C$. On $C$, we realize
all subsolutions of all (at most $2k$) children of $\gamma$, in all possible combinations.
Except for these modifications, the algorithm remains unchanged, and we obtain:
\begin{thm}[Detailed version of Theorem~\ref{thm:connected}]
Let $G$ be a planar connected graph on $n$ vertices, and $F$ a set of $k$ new edges (vertex 
pairs, in fact), where $k$ and $\Delta_\mathrm{cut}$--- the maximum degree of the cut vertices in $G$---are constant.
We can solve $\MEI(G,F)$ in 
$\OO(n\cdot {\Delta_\mathrm{cut}}^{\Theta(k)} \cdot (\poly(k)!)^{\Theta(k)})$ time.
\end{thm}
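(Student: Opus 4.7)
The plan is to reduce to the biconnected case (Theorem~\ref{thm:bicon}) by passing to the block-cut decomposition and amalgamating it with the per-block SPR-trees into a single rooted decomposition tree with an additional node type \textbf{C} corresponding to cut vertices. First I would compute the BC-tree $\scri B$ of $G$ in linear time, and then replace each \textbf{B}-node, representing a block $B$, by the SPR-tree of $B$ appropriately attached at cut vertices; this yields a linear-sized \emph{con-tree} $\scri C$. I would root $\scri C$ at a block containing an endpoint of some edge $f\in F$, and iteratively prune non-dirty leaves (where a subtree is dirty if its pertinent graph contains an endpoint of~$F$), so that only an FPT-bounded number of dirty subtrees remain active in the dynamic program.

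The second step is to lift the subproblem definition from Section~\ref{sec:nonrigid} so it makes sense at \textbf{C}-nodes. At an SPR-node $\nu$ whose parent is a \textbf{C}-node $\gamma$ carrying cut vertex $c\in V(S_\nu)$, routing queries no longer attach to one of two sides of a virtual edge but to one of several faces incident with~$c$. I would encode this combinatorially by, for each considered embedding of $S_\nu$, planarly expanding $c$ into a small cycle $C$ whose contraction returns $c$ and routing the $c$-incident edges around~$C$; then subproblems record, for each new edge $f\in F$, which segment of $C$ is used to enter and leave $P_\nu$. The number of such choices is bounded by a function of $k$ and of $\deg(c)\le\Delta_\mathrm{cut}$, which remains constant under our hypothesis.

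The dynamic program then proceeds bottom-up exactly as before on SPR-nodes, invoking the rigid MEI solver of Theorem~\ref{thm:rigidalg} at each skeleton embedding after plugging in minimum-cut weights for non-dirty virtual edges and the precomputed subsolutions for dirty ones. At a \textbf{C}-node~$\gamma$ with cut vertex $c$, I would combine the subsolutions of its (at most $2k$, since non-dirty subtrees were pruned) children: for each cyclic order and nesting of the children's routing-query sides around $C$, and each consistent matching of query endpoints belonging to the same $f\in F$, I produce a combined subsolution for the parent by summing crossing costs. By Lemma~\ref{lem:shortTseq}, each $f$ crosses any block boundary only $\OO(\poly(k))$ times, so only boundedly many such enter/leave events per edge need to be tracked, and the total number of combinations at $\gamma$ is bounded by a function of $k$ times a term in $\Delta_\mathrm{cut}$.

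The main obstacle, and the reason the connected case is strictly weaker than the biconnected one, is precisely bounding the choices at a \textbf{C}-node: unlike P-nodes, whose $(p-1)!$ embeddings can be collapsed to the relatively ordered non-dirty bundle, at a cut vertex $c$ of large degree there is no ``non-dirty bundle'' to collapse because a crossing-free routing can weave arbitrarily between non-dirty blocks that happen to sit around~$c$, so the number of relevant embeddings genuinely depends on $\deg(c)$. Assuming $\Delta_\mathrm{cut}=\OO(1)$ reduces this to a constant and preserves the overall linear runtime: per node of $\scri C$ we perform $f(k,\Delta_\mathrm{cut})$ calls to the algorithm of Theorem~\ref{thm:rigidalg}, each running in linear time in the skeleton size, and the total skeleton size over $\scri C$ is $\OO(n)$ since blocks and skeletons are edge-disjoint; summing gives the claimed $\OO(n)$ bound.
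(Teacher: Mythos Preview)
Your proposal is correct and follows essentially the same approach as the paper: build the con-tree by amalgamating the BC-tree with per-block SPR-trees, lift the routing-query subproblems at cut vertices via the cycle-expansion gadget, and run the same bottom-up dynamic program with rigid MEI calls. The paper makes one cosmetic choice differently---it performs no operation at \textbf{C}-nodes themselves and instead does the combining of the $\le2k$ dirty child blocks at the \emph{parent} of the \textbf{C}-node (again via replacing $c$ by a cycle in that parent's skeleton)---but this is equivalent to your scheme. One phrasing to tighten: after pruning non-dirty leaves the con-tree can still have $\OO(n)$ nodes (think of a long chain of R-nodes between two $F$-endpoints), so ``only an FPT-bounded number of dirty subtrees remain active'' is misleading; what is FPT-bounded is the number of dirty children at any node and the number of subproblems per node, which is exactly what your final runtime paragraph uses.
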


\end{accumulate}

\paragraph{Acknowledgments.}
We thank Sergio Cabello and Carsten Gutwenger for helpful discussions.

\bibliographystyle{abbrv}
\bibliography{exactmei}

\ifaccumulating{%
\newpage\appendix\sloppy
\section{Appendix}
\def\thethm{A-\arabic{thm}}\setcounter{thm}{0}
\accuprint
}


\end{document}